\pgfplotsset{compat=1.7}
\DeclareMathOperator*{\argmax}{argmax} 
\begin{document}
\title{A Differential Approach for Data and Classification Service based Privacy-Preserving Machine Learning Model in Cloud Environment}
\titlerunning{New Generation Computing}  
\author{Rishabh Gupta \and
Ashutosh Kumar Singh}
\authorrunning{New Generation Computing}
\institute{Department of Computer Applications, National Institute of Technology, Kurukshetra, Haryana, INDIA\\
\email{rishabhgpt66@gmail.com}\\
\email{ashutosh@nitkkr.ac.in}\\
This article has been accepted in Springer New Generation Computing Journal \(\copyright\)  2022 Springer Nature. Personal use of this material is permitted. Permission from Springer must be obtained for all other uses in any current or future media, including reprinting/republishing this material for advertising or promotional purposes, creating new collective works, for resale or redistribution to servers or lists, or reuse of any copyrighted component of this work in other works. This work is freely available for survey and citation.
}
\maketitle
\begin{abstract}
The massive upsurge in computational and storage has driven the local data and machine learning applications to the cloud environment. The owners may not fully trust the cloud environment as it is managed by third parties. However, maintaining privacy while sharing data and the classifier with several stakeholders is a critical challenge. This paper proposes a novel model based on differential privacy and machine learning approaches that enable multiple owners to share their data for utilization and the classifier to render classification services for users in the cloud environment. To process owners’ data and classifier, the model specifies a communication protocol among various untrustworthy parties. The proposed model also provides a robust mechanism to preserve the privacy of data and the classifier. The experiments are conducted for a Naive Bayes classifier over numerous datasets to compute the proposed model's efficiency. The achieved results demonstrate that the proposed model has high accuracy, precision, recall, and F1-score up to 94\%, 95\%, 94\%, and 94\%, and improvement up to 16.95\%, 20.16\%, 16.95\%, and 23.33\%, respectively, compared with state-of-the-art works.
\end{abstract}
\keywords{Cloud computing \and Machine learning \and Differential privacy \and  Classification \and Privacy-preserving}
\section{INTRODUCTION}
Cloud computing has the capability to provide an ample amount of data storage, computation, analysis, and sharing services to organizations without revealing its implementation and platform details \cite{iwasa2020development}. Due to these illimitable services, any organization outsources its data and model from the local to the cloud platform \cite{xu2018secure}, \cite{stergiou2017efficient}. Nowadays, machine learning has gained a lot of attention in a wide range of real-world applications, including image recognition, spam detection, financial market analysis, and recommendation systems \cite{sarker2021machine}, \cite{batouche2021comprehensive}. The machine learning classifier is extensively utilized in these applications. Thereafter, the emerging challenges include handling the privacy of the model and sensitive data \cite{ghorbel2017privacy}. The cloud stores and computes massive amounts of collected data and the classification model that are outsourced from various owners without interaction with each other. The owners hesitate to share their data and model with the cloud for computation and storage since a third party runs it \cite{ali2015security}. The owners lose control of their outsourced data, model and are unaware of its extraction from the cloud platform \cite{shen2018enabling}. 
According to a Cisco survey, 76\% of owners have no clue about their data utilization by other parties \cite{cisco}.
The cloud may misuse and provide both data and the model to other parties for different purposes \cite{wei2014security}, \cite{peyvandi2021computer}. Due to these reasons, the privacy-preserving of data and the model has become a challenge for any organization. Therefore, it is essential to protect the data and classification model by applying some privacy strengthening process before transferring them to an untrusted cloud server.
\par
To address the challenges mentioned above, we propose a Differential Approach for data and classification service-based Privacy-preserving Machine Learning Model (DA-PMLM) in the cloud environment. In the proposed model, $\epsilon$-differential privacy protection is considered on the owner's side because they do not aspire to disclose actual data, and the model \cite{dwork2006calibrating, zhao2019novel, fang2020local,thilakanathan2014secure}. Fig. 1 outlines a bird-eye view of the proposed work and emphasizes our consistent contributions to protect the data as well as the classification model from unauthorized parties. The owners inject different statistical noise into data and the model according to various applications and queries. The achieved data and model are uploaded to the cloud platform, and classification services are provided. The obtained machine learning model is applied over collected data for classification. The cloud platform receives the classified data from the classification model and distributes it to the data owners or users rather than other parties. It classifies tasks and allows data sharing in the cloud environment. The main contributions of the proposed work have been described as follows:
 	\begin{figure*}[!htbp]
	\begin{center}
	\includegraphics[width=0.75\textwidth]{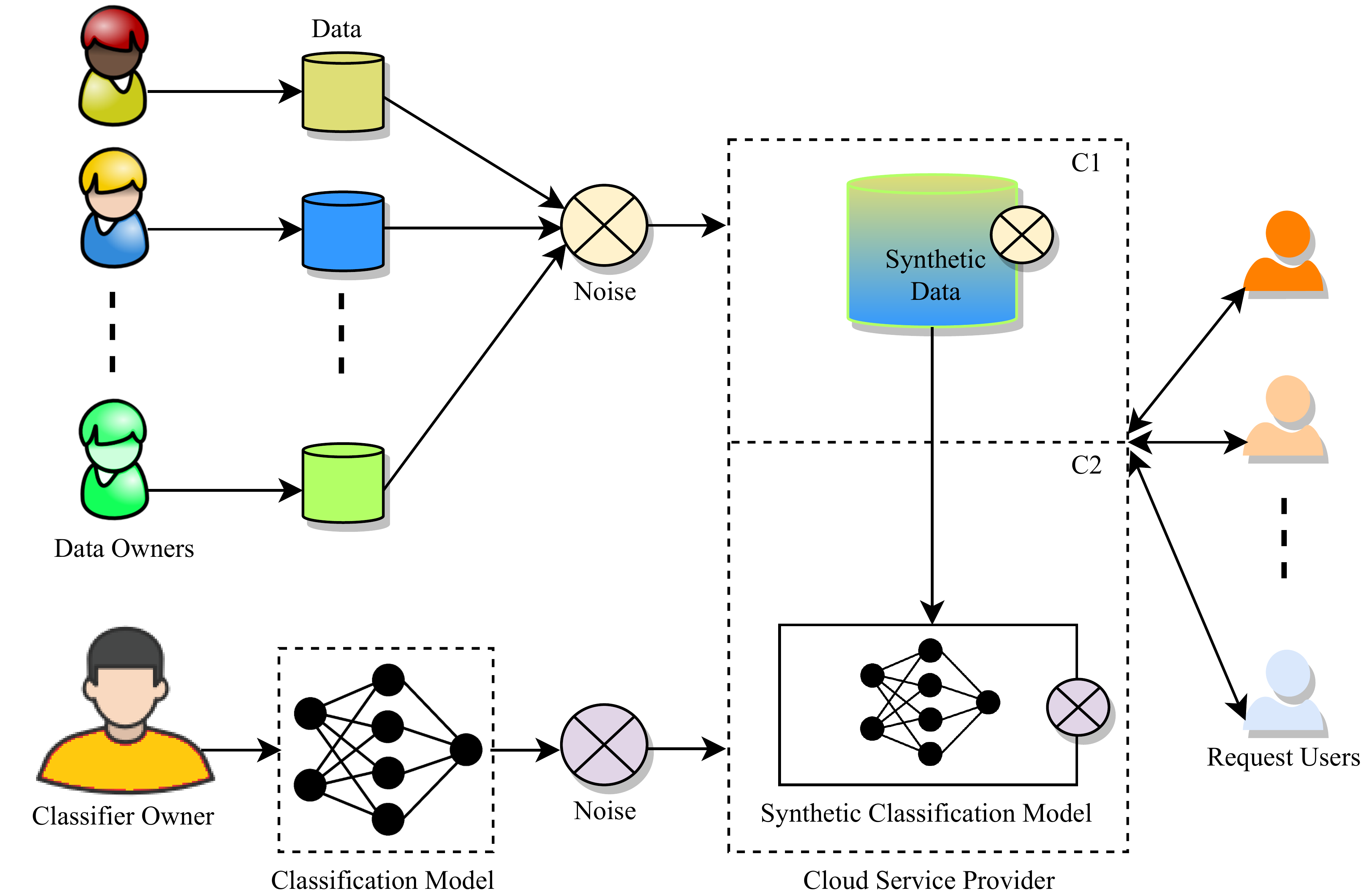}
		\caption{Bird eye view of DA-PMLM}
		\label{fig:23}
	\end{center}
\end{figure*}
\begin{itemize}
\item[$\bullet$] DA-PMLM permits several data owners and the classifier owner to share their data and machine learning model without hindrance. The Laplace mechanism has been used to keep the data and model private before sharing.
\item[$\bullet$] DA-PMLM uses two clouds: cloud1 and cloud2. Cloud1 handles the data storage and sharing process, whereas cloud2 carries out the classification tasks over accumulated data from various owners. To secure data and model with improving privacy, all entities are deemed untrustworthy.
\item[$\bullet$] DA-PMLM ensures the level of security because the value of the privacy budget is assigned according to the requirement of protection. 
\item[$\bullet$] Implementation and evaluation are performed using various datasets, which shows that DA-PMLM outperforms the state-of-art works in terms of privacy, accuracy, precision, recall, and f1-score.
\end{itemize} 
\textit{Organization:} Section 2 describes the related work followed by the proposed model and the process of preserving data in Sections 3 and 4, respectively. The privacy-preserving of the model and the data classification steps are presented in Section 5 and Section 6, respectively. The illustration of the proposed model is defined in Section 7. The operational design and computational complexity are discussed in Section 8. The experimental results with statistical analysis are shown in Section 9, followed by the conclusion and future work in Section 10. Table 1 demonstrates the list of symbols with their descriptive terms that have been used throughout this manuscript.
\begin{table}[h]
		\caption{List of Terminologies with their Explanatory Terms}
		\label{TableExp97}
          \centering
			\begin{tabular}{l l|l l|l l}\hline 
			\hline 
\textit{$CO$}: & Classifier Owner & \textit{$n^{**}$}: & Training objects
& \textit{$P$}: & Precision \\  
	
\textit{UE}: & Untrusted Entity &
\textit{$D_{i}$}: & Actual data & \textit{$N_{i}$}: & Noise \\ 

\textit{$CM$}: & Classification Model & \textit{$RU_{id}$}: & Request Users & \textit{$D_{i}^{N}$}: & Synthetic Data \\ 

\textit{$CA$}: & Classification Accuracy & \textit{$\Check{D}_{t^{'}}^{N}$}: & Training Data &
\textit{$R$}: & Recall \\

\textit{$\hat{D}_{i}^{N}$}: & Preprocessed Data & \textit{$\mathbb{C}^{L}$}: & Label Vector & \textit{$\Check{D}_{t^{''}}^{N}$}: & Testing Data \\

\textit{$\Delta$ f}: & Absolute distance & \textit{$n$}: & No. of data object & 
\textit{$FP$}: & False Positive \\  

\textit{DP}: & Differential Privacy  & \textit{$CS$}: & Cloud Storage & \textit{Pr}: & Probability\\

\textit{$\vartheta$}: & Scaling parameter & \textit{$n^{***}$}: & Testing objects & $\epsilon$: & Privacy budget \\

\textit{$\hat{R}$}: & Random function & \textit{f}: & Query function &  \textit{$n^{*}$}: & count of classes\\

\textit{$CM^{N}$}: & Synthetic Model & \textit{$CF$}: & Classifier & \textit{$FS$}: & F1-Score \\

\textit{$CSP$}: & Cloud Service Provider & \textit{$C1$}: & Cloud1 & \textit{$C2$}: & Cloud2 \\
\textit{$\sigma$}: & Standard deviation & \textit{$\mu$}: & Mean & \textit{$\Lambda$}: & Data attribute \\
 
				 \hline \hline
			\end{tabular}
	\end{table}
 \section{Related Work}
To protect the data, and machine learning model using the differential privacy mechanism has been categorized at the two levels: (a) At the data level, (b) At the model level.
 \subsection{At Data Level}
 To protect data privacy, Li et al. \cite{li2018privacy} proposed a differentially private scheme called Privacy-preserving Machine Learning with Multiple data providers (PMLM) with improved computational efficiency and data analysis accuracy. The PMLM scheme uses public-key encryption with a double decryption algorithm (DD-PKE) to transform the encrypted data into a randomized dataset and $\epsilon$-differential privacy to make the data private. But the PMLM scheme suffered from less accuracy as well as limited data sharing.
 Wang et al. \cite{wang2018privacy} proposed a distributed agent-based privacy-preserving framework, namely DADP, for real-time spatial statistics data collection and publication with an untrusted server. A distributed budget allocation mechanism and an agent-based dynamic grouping mechanism were developed to achieve global w-event $\epsilon$-differential privacy in a distributed manner. In DADP, crowd-sourced data is aggregated, and then the noise is added to it using the Laplace mechanism. However, it was considered a semi-centralized setting and resulted in a more complex system because it initiated a batch of trusted proxies (Agents) and anonymous connection technology to protect the privacy of users under an untrusted server.
A privacy-preserving deep learning model, namely PDLM, is presented in \cite{ma2018pdlm} that protects training data by encrypting it with multiple keys using a public-key distributed two trapdoors (DT-PKC) cryptosystem. The proposed scheme minimized the storage cost. However, it necessitates more calculations on the ciphertext, resulting in decreased efficiency.
 Hassan et al. \cite{hassan2019efficient} proposed an efficient privacy-preserving scheme based on machine learning to protect data privacy. The partially homomorphic encryption technique was utilized to re-encrypt data, and a differential privacy mechanism was used to add noise to the data. It permits all participants to publicly verify the validity of the ciphertext using a unidirectional proxy re-encryption (UPRE) scheme that reduces computational costs. But the proposed scheme has limited data sharing.
 A local differential privacy (LDP)-based classification algorithm for data centers is introduced \cite {fan2020privacy} with high efficiency and feasible accuracy. Sensitive information is protected by adding the noise using the Laplace mechanism in the pattern mining process. They devised a method for determining the level of privacy protection. However, the proposed method does not allow for data sharing.
To prevent information leakage, a framework, namely, noising before model aggregation federated learning (NbAFL), was proposed by Wei et al. \cite{wei2020federated}. Before the model aggregation, a differential privacy mechanism was utilized to add the noise to clients' local parameters. Nevertheless, this framework requires a significant amount of noise to add and sacrifices model utility.
 An adaptive privacy-preserving federated learning framework is introduced in \cite{liu2020adaptive}, which protects gradients by inserting the noise with varying privacy budgets. They devised a randomized privacy-preserving adjustment technology (RPAT) to enhance the accuracy, but the proposed framework makes the classification process computationally costly.
 A machine learning and probabilistic analysis-based model, namely MLPAM, is presented in \cite{gupta2020mlpam}. It supports various participants to share their data safely for different purposes using encryption, machine learning, and probabilistic approaches. This scheme reduced the risk associated with the leakage for prevention coupled with detection, but it does not provide privacy for the classifier. 
 To protect the Spatio-temporal aggregated data in real-time, Xiong et al. \cite{xiong2020real} proposed a privacy-preserving framework based on the local differential approach. The authors also devised a generalized randomized response (GRR) framework for obtaining reliable aggregated statistics continuously while maintaining user’s privacy. The proposed framework improves data usage but exposes the user's privacy by adding independent identically distributed noise to the associated data.
 Liu et al. \cite{liu2020local} proposed a differential privacy for local uncertain social network (DP-LUSN) model that protects the social network's community structured data. DP-LUSN increased the data utilization by minimizing the noise effect on a single edge. However, DP-LUSN only operates on static social networks.
 Sharma et al. \cite{sharma2021differential} proposed a Differential Privacy Fuzzy Convolution Neural Network framework, namely DP-FCNN, which protects the data and query processing by adding the noise using the Laplace mechanism.  The authors used the lightweight Piccolo algorithm to encrypt the data and the BLAKE2s algorithm to extract the key attributes from the data. However, DP-FCNN increases the computational overhead.
 
\subsection{At Model Level}
A privacy-preserving Naive Bayes learning scheme with various data sources using $\epsilon$-differential privacy and homomorphic encryption is presented in \cite{li2018differentially}. This scheme enables the trainer to train the Naive Bayes classier over the dataset provided jointly by the different data owners. But still, adversaries can forge and manipulate the data in this scheme.
Gao et al. \cite{GAO201872} proposed a privacy-preserving Naive Bayes classifier scheme to avoid information leakage under the substitution-then-comparison (STC) attack. They adopted a double-blinding technique to preserve the privacy of Naive Bayes. The proposed scheme reduced both the communication and computation overhead. But this scheme is unable to obtain the discovery of truth that protects privacy.
A privacy-preserving outsourced classification in the cloud computing (POCC) framework was introduced in \cite{li2018privacy1234}. It protects the privacy of sensitive data under various public keys using a fully homomorphic encryption proxy technique without leakage. However, encryption for outsourced data can protect privacy against unauthorized behaviors. It also makes effective data utilization, such as search over encrypted data, a complicated issue.
Liu et al. \cite{LIU2018807} proposed a private decision tree algorithm based on the noisy maximal vote. An effective privacy budget allocation strategy was adopted to make the balance between the true counts and noise.  It was constructed as an ensemble model with differential privacy to boost the accuracy and improve the stableness. In the proposed algorithm, the privacy analysis was performed only on each separate tree and not on the ensemble as a whole.
A differentially private gaussian processes classification (GPC) model was presented in \cite{xiong2019differential}, which adds noise to the classifier to provide privacy. The authors used the Laplacian approximation method to determine GPC's sensitivity and scaled noise to outputs generated from other dataset sections. In this manner, the total noise can be decreased. However, the proposed model is not sparse.
Wang and Zhang \cite{wang2020differential} proposed a differential privacy version of convex and nonconvex sparse classification approach based on alternating direction method of multiplier (ADMM) algorithm with mild conditions on the regularizers. By adding exponential noise to stable stages, they were able to turn the sparse problem into a multistep iteration process and accomplish privacy protection. However, estimating the noise introduced to stochastic algorithms in the proposed approach is problematic.
A differentially private ensemble learning method for classification, referred to as DPEL, was presented in \cite{LI202134} which provides privacy protection while ensuring prediction accuracy. The authors applied the Bag of Little Bootstrap and the Jaccard similarity coefficient techniques to enhance the ensemble's diversity. A privacy budget allocation strategy was designed to make the differentially private base classifiers by adding noise. But the DPEL method is affected by limited data sharing. Table 2 provides an overview of the literature review.
{\setlength\defaultaddspace{0.5ex}
\captionof{table}{Tabular sketch of the literature review}
\small
\begin{tabularx}{\linewidth}{|p{1.8cm}|p{3.5cm}|p{3cm}|p{1.7cm}|}

\hline
        Model/Scheme /Framework & Workflow \& Implementation & Outcomes 
        & Drawback 
        \\
        \hline  \hline
 A machine learning-based privacy-preserving model for data protection \cite{li2018privacy}
        & 
    \vspace{-0.3cm} 
        \begin{itemize}[topsep=-0.3cm,leftmargin=0.2cm,label=\textbullet]
  \item To prevent data leakage, additively homomorphic encryption and differential privacy techniques were employed
     \item Abalone, Wine, Cpu, Glass, and Krkopt data sets were used for experiments 
        \end{itemize} & 
      \vspace{-0.3cm} 
        \begin{itemize}[topsep=-0.3cm,leftmargin=0.2cm,label=\textbullet]
            \item Improve the accuracy of data analysis and the efficiency of computations
            \item The model seems to be more protected as per the security analysis
        \end{itemize} & 
       For the classifier, it does not provide privacy \\
        \hline   
 A distributed privacy-preserving task allocation framework \cite{wang2018privacy}    & 
    \vspace{-0.3cm} 
        \begin{itemize}[topsep=-0.3cm,leftmargin=0.2cm,label=\textbullet]
            \item $w$ event and $\epsilon$-Differential Privacy mechanisms were used to protect data
            \item Python language and taxi trajectory, nice rider datasets were used for experiments 
        \end{itemize} & 
      \vspace{-0.3cm} 
        \begin{itemize}[topsep=-0.3cm,leftmargin=0.2cm,label=\textbullet]
            \item Less Mean Absolute Error (MAE) and Mean Relative Error (MRE)
\item Enhance the overall throughput of the whole vehicular network  
        \end{itemize} & 
A trade-off between accuracy and the overall payment of crowd-sensing server
  \\
        \hline   
A cloud-based Deep Learning model for data privacy  \cite{ma2018pdlm} & 
    \vspace{-0.3cm} 
        \begin{itemize}[topsep=-0.3cm,leftmargin=0.2cm,label=\textbullet]
            \item A homomorphic encryption mechanism was adopted to encrypt the data
            \item To assess PDLM performance, a LeNet deep learning model was utilized
        \end{itemize} & 
      \vspace{-0.3cm} 
        \begin{itemize}[topsep=-0.3cm,leftmargin=0.2cm,label=\textbullet]
            \item Achieve high accuracy up to 94\%
            \item Perform the tasks of privacy-preserving training and classification appropriately
        \end{itemize} & The use of multiple keys for big data reduced the efficiency of the machine learners

  \\
        \hline  
A privacy-preserving with public verifiability scheme \cite{hassan2019efficient}   & 
    \vspace{-0.3cm} 
        \begin{itemize}[topsep=-0.3cm,leftmargin=0.2cm,label=\textbullet]
            \item Data is protected by applying differential privacy and partially homomorphic encryption techniques
            \item Java pairing based cryptography (JPBC) library was used for experiments
        \end{itemize} & 
      \vspace{-0.3cm} 
        \begin{itemize}[topsep=-0.3cm,leftmargin=0.2cm,label=\textbullet]
            \item Lower computational as well as communication costs
            \item The security analysis verifies that the scheme is more secure under the random oracle model
        \end{itemize} & With homomorphic encryption, the issue of ciphertext expansion arise
 \\
        \hline   
 A classification algorithm satisfies local differential privacy \cite {fan2020privacy} & 
    \vspace{-0.3cm} 
        \begin{itemize}[topsep=-0.3cm,leftmargin=0.2cm,label=\textbullet]
            \item Internet of Multimedia Things (IoMT) produces the sensitive information
            \item Laplace noise is injected into the data during pattern mining
        \end{itemize} & 
      \vspace{-0.3cm} 
        \begin{itemize}[topsep=-0.3cm,leftmargin=0.2cm,label=\textbullet]
            \item High efficiency, reliability, and precision
            \item Less MAE, mean squared error (MSE) metrics
        \end{itemize} & 
IoMT devices need higher latency and bandwidth \\
        \hline   
A framework protects training data based on differential privacy  \cite{wei2020federated}
        & 
    \vspace{-0.3cm} 
        \begin{itemize}[topsep=-0.3cm,leftmargin=0.2cm,label=\textbullet]
            \item Inject the noise into the parameters of the local learning model
            \item This framework was assessed by utilizing multi-layer perception and MNIST datasets
        \end{itemize} & 
      \vspace{-0.3cm} 
        \begin{itemize}[topsep=-0.3cm,leftmargin=0.2cm,label=\textbullet]
            \item A trade-off between federated learning performance and privacy
            \item  Acquire high privacy using random mini-batches
        \end{itemize} & 
        Less accuracy due to injected noise in the parameters \\
        \hline  
A secure multi-party computation strategy by adaptive privacy-preserving \cite{liu2020adaptive} & 
    \vspace{-0.3cm} 
        \begin{itemize}[topsep=-0.3cm,leftmargin=0.2cm,label=\textbullet]
            \item The Laplace mechanism was adopted to add noise with different
privacy budget
            \item TensorFlow library and MNIST dataset were used for experiments
        \end{itemize} & 
      \vspace{-0.3cm} 
        \begin{itemize}[topsep=-0.3cm,leftmargin=0.2cm,label=\textbullet]
            \item Achieve high accuracy up to 88.46\%
            \item Less computation and transmission overhead
        \end{itemize} & 
 Data perturbation may degrade the data utility\\
        \hline   
A secure data sharing model for determining guilty entities against data leakage \cite{gupta2020mlpam}
        & 
    \vspace{-0.3cm} 
        \begin{itemize}[topsep=-0.3cm,leftmargin=0.2cm,label=\textbullet]
            \item The attribute based encryption and differential privacy were used to preserve the privacy of multiple participants' data
            \item The pyhton language and Glass, Iris, Wine, and Balance Scale datasets were utilized for experiments
        \end{itemize} & 
      \vspace{-0.3cm} 
        \begin{itemize}[topsep=-0.3cm,leftmargin=0.2cm,label=\textbullet]
            \item Perform secure data distribution among users by an effective allocation mechanism
            \item Achieve high accuracy up to 97\%
        \end{itemize} & 
      High computational and communication costs while transferring the data \\
        \hline
  A real-time spatio-temporal data aggregation method \cite{xiong2020real}
           & 
   \vspace{-0.3cm} 
        \begin{itemize}[topsep=-0.3cm,leftmargin=0.2cm,label=\textbullet]
            \item LDP was employed for data protection
            \item MATLAB R2018a programming language and Geolife, Taxi Service datasets were used to conduct the experimentations
        \end{itemize} & 
      \vspace{-0.3cm} 
        \begin{itemize}[topsep=-0.3cm,leftmargin=0.2cm,label=\textbullet]
            \item Less MAE, Mean MAE (MMAE), Mean KL divergence (MKLD), and Mean RelativeError (MRE)
            \item Acquire adequate trade-off of privacy and utility 
        \end{itemize} & 
 High computation and communication costs due to generalized randomized response\\
        \hline   
A local differential privacy scheme for social network publishing     \cite{liu2020local}       & 
   \vspace{-0.3cm} 
        \begin{itemize}[topsep=-0.3cm,leftmargin=0.2cm,label=\textbullet]
            \item The perturbed local method was applied to yield a synthetic network
            \item The experiments are performed using Python language, and WebKB, Citation, and Cora datasets
        \end{itemize} & 
      \vspace{-0.3cm} 
        \begin{itemize}[topsep=-0.3cm,leftmargin=0.2cm,label=\textbullet]
            \item Decrease network structure information's loss
            \item This model provides a substantial degree of privacy
        \end{itemize} & 
 It does not support dynamic social networks\\
        \hline   
A fuzzy convolution neural network framework based on Differential Privacy for data security   \cite{sharma2021differential}     & 
   \vspace{-0.3cm} 
        \begin{itemize}[topsep=-0.3cm,leftmargin=0.2cm,label=\textbullet]
            \item Data is encrypted with the help of a 128-bit block cipher
            \item Java Development toolkit (JDK) 1.8, Weka tools, and Adult and Heart disease dataset were used for experimentations
        \end{itemize} & 
      \vspace{-0.3cm} 
        \begin{itemize}[topsep=-0.3cm,leftmargin=0.2cm,label=\textbullet]
            \item The comparative analysis shows that it takes less run-time up to 14 ms
            \item More effective in terms of scalability and accuracy
        \end{itemize} & 
 Not suitable for hybrid machine learning algorithms\\
        \hline   
A privacy-preserving learning algorithm based on differential privacy\cite{li2018differentially}
        & 
      \vspace{-0.3cm} 
        \begin{itemize}[topsep=-0.3cm,leftmargin=0.2cm,label=\textbullet]
            \item The differential privacy and homomorphic encryption mechanisms were used for data preservation
            \item A GMP library was adopted to execute cryptographic operations
        \end{itemize} & 
      \vspace{-0.3cm} 
        \begin{itemize}[topsep=-0.3cm,leftmargin=0.2cm,label=\textbullet]
            \item Achieve secure data sharing without breaking privacy
            \item Require less computational time for training model
        \end{itemize} & 
        Less efficiency because of encryption \\
        \hline 
A privacy-preserving scheme for protecting the trained model  \cite{GAO201872}
        & 
    \vspace{-0.3cm} 
        \begin{itemize}[topsep=-0.3cm,leftmargin=0.2cm,label=\textbullet]
            \item Additively homomorphic encryption was used to encrypt the data
            \item The experiments were conducted over Breast Cancer Wisconsin (Original), Statlog Heart datasets
        \end{itemize} & 
      \vspace{-0.3cm} 
        \begin{itemize}[topsep=-0.3cm,leftmargin=0.2cm,label=\textbullet]
            \item Less the computation overhead because of the offline stage of the server
            \item Accomplish suitable privacy-preserving training tasks
        \end{itemize} & 
      It does not provide efficient information sharing \\
        \hline

A framework for protecting the outsourced classification model
 \cite{li2018privacy1234}
        & 
    \vspace{-0.3cm} 
        \begin{itemize}[topsep=-0.3cm,leftmargin=0.2cm,label=\textbullet]
            \item Data was encrypted by using the multikey homomorphic encryption proxy
            \item The experiments were carried out using a naive bayes classifier
        \end{itemize} & 
      \vspace{-0.3cm} 
        \begin{itemize}[topsep=-0.3cm,leftmargin=0.2cm,label=\textbullet]
            \item More effective for data utilization
            \item Achieve high-level data protection
        \end{itemize} & 
    Less efficiency due to additions, multiplications, and the nonlinear computations over encrypted data  \\
        \hline
A method for learning a differentially private decision tree \cite{LIU2018807}    & 
   \vspace{-0.3cm} 
        \begin{itemize}[topsep=-0.3cm,leftmargin=0.2cm,label=\textbullet]
            \item The differentially-private decision tree was trained by leveraging the noisy maximal vote
            \item The experiments were carried out over Mushroom and Adult datasets
        \end{itemize} & 
      \vspace{-0.3cm} 
        \begin{itemize}[topsep=-0.3cm,leftmargin=0.2cm,label=\textbullet]
            \item Improve the learned classifier's utility
        \item Provide high accuracy  up to 81.19\%
        \end{itemize} & 
 The ensemble model's performance can not be maximized\\
        \hline   
A privacy-preserving model for data classification \cite{xiong2019differential}   & 
   \vspace{-0.3cm} 
        \begin{itemize}[topsep=-0.3cm,leftmargin=0.2cm,label=\textbullet]
            \item A variant-noise mechanism was used to perturb the classifier with the different scaled noise
            \item The experiments are conducted over an actual oil dataset provided by NCRG of Aston University
        \end{itemize} & 
      \vspace{-0.3cm} 
        \begin{itemize}[topsep=-0.3cm,leftmargin=0.2cm,label=\textbullet]
            \item Achieve elevated accuracy up to 75.00\%
            \item Retain the classifier's accuracy
        \end{itemize} & 
Noise inevitably decreases the accuracy \\
        \hline   
A privacy-preserving for convex and nonconvex sparse classification \cite{wang2020differential}  & 
   \vspace{-0.3cm} 
        \begin{itemize}[topsep=-0.3cm,leftmargin=0.2cm,label=\textbullet]
            \item The sparse problem was transformed into a multistep iteration process through ADMM algorithm
            \item The logistic regression and KDDCup99 dataset were used for experiments
        \end{itemize} & 
      \vspace{-0.3cm} 
        \begin{itemize}[topsep=-0.3cm,leftmargin=0.2cm,label=\textbullet]
            \item More effective and efficient for performing sensitive data analysis
            \item Accomplish appropriate privacy-preserving training and classification tasks
        \end{itemize} & 
It does not share the data effectively in several environments \\
        \hline   
A privacy-preserving method for preventing leakage in the  process of classification \cite{LI202134} & 
   \vspace{-0.3cm} 
        \begin{itemize}[topsep=-0.3cm,leftmargin=0.2cm,label=\textbullet]
            \item Laplace noise is injected into the classifier's objective function
            \item The experiments were run over Wisconsin, User knowledge, Mushroom, Adult, and Bank Market datasets

        \end{itemize} & 
      \vspace{-0.3cm} 
        \begin{itemize}[topsep=-0.3cm,leftmargin=0.2cm,label=\textbullet]
            \item Perform a trade-off between privacy and accuracy
            \item Acquire high accuracy up to 94.66\%
        \end{itemize} & 
 It does not provide the data protection \\
        \hline   
\end{tabularx}
\label{table:ressuffixes}
}
\par
In the light of the aforementioned discussion, it can be concluded that earlier works limited data sharing and injected the noise into either data or classification model and/or protected them by employing diverse encryption techniques followed by the machine learning-based classification, resulting in decreased privacy, accuracy, and higher computing cost. In addition, the prior models only supported a single owner or untrustworthy entity. Unlike the previous works, DA-PMLM ensures that both data and the classification model are protected before being shared. The proposed model enables numerous data and classifier owners to securely share outsourced data and classification model while assuming all entities to be untrustworthy.
\section{PROPOSED MODEL}
The proposed model (Fig. 2) consists of four entities: Data Owners (\textit{$DO_{id}$}), Classifier Owner (\textit{$CO$}), Cloud Service Provider (\textit{$CSP$}), and Request Users (\textit{$RU_{id}$}), which are described as follows in terms of intercommunication and vital information flow:
\begin{enumerate}
 \item[1)]\textit{$DO_{id}$}: An entity that yields the data and appeals to \textit{$CSP$} for storing, computing, and sharing services. To protect the data, \textit{$DO_{id}$} injects the noise into data using the $\epsilon$-differential privacy mechanism prior send it to \textit{$CSP$}. \textit{$DO_{id}$} permits \textit{$CSP$} to share the data with \textit{$RU_{id}$}. Since it is assumed that \textit{$DO_{id}$} can not misuse its data, it may disclose the data of other owners. Therefore, \textit{$DO_{id}$} is not considered a trusted entity.
 \item[2)] \textit{$CO$}: An entity that has a classification model (\textit{CM}) and offers the services to perform the classification tasks through \textit{$CSP$}. Before passing \textit{CM} to \textit{$CSP$}, \textit{$CO$} injects the noise utilizing the $\epsilon$-differential privacy mechanism to retain the privacy and stores it. \textit{$CO$} is viewed as an untrustworthy entity in the proposed model.
 \item[3)] \textit{$CSP$}: An entity that collects all the data from \textit{$DO_{id}$} and \textit{CM} from \textit{$CO$}. It provides storage, computation, and data sharing services to \textit{$DO_{id}$}, \textit{$CO$}, and \textit{$RU_{id}$}. It also offers the classification services using \textit{CM} to \textit{$DO_{id}$}, and \textit{$RU_{id}$}. \textit{$CSP$} trains \textit{CM} using the machine learning algorithm over collected data and obtains classified data from \textit{CM}. These accessed results are shared between  \textit{$DO_{id}$}, or \textit{$RU_{id}$}. \textit{$CSP$} is introduced as a semi-trusted entity in the proposed model, as it strictly follows the protocol but is curious to learn the information. In our model, the Cloud Platform (\textit{CP}) consists of two clouds: Cloud1 (\textit{C1}) and Cloud2 (\textit{C2}). The data is stored in Cloud Storage (\textit{CS}) at \textit{C1} while the classification model is kept in Classifier (\textit{CF}) at \textit{C2}. \textit{$CSP$} is the only entity that acts as a bridge among \textit{$DO_{id}$}, \textit{$CO$}, and \textit{$RU_{id}$}. It stores the data and model to perform the data sharing and classification tasks.
 \item[4)] \textit{$RU_{id}$}: An entity that requests \textit{$CSP$} to obtain the owner's data, and classification service. \textit{$RU_{id}$} receives the data from \textit{C1} and accesses the \textit{CM} from \textit{C2} through the \textit{$CSP$}. \textit{$RU_{id}$} is treated as an untrusted entity in our model. 
  \end{enumerate}
 	\begin{figure*}[!htbp]
	\begin{center}
	\includegraphics[width=0.85\textwidth]{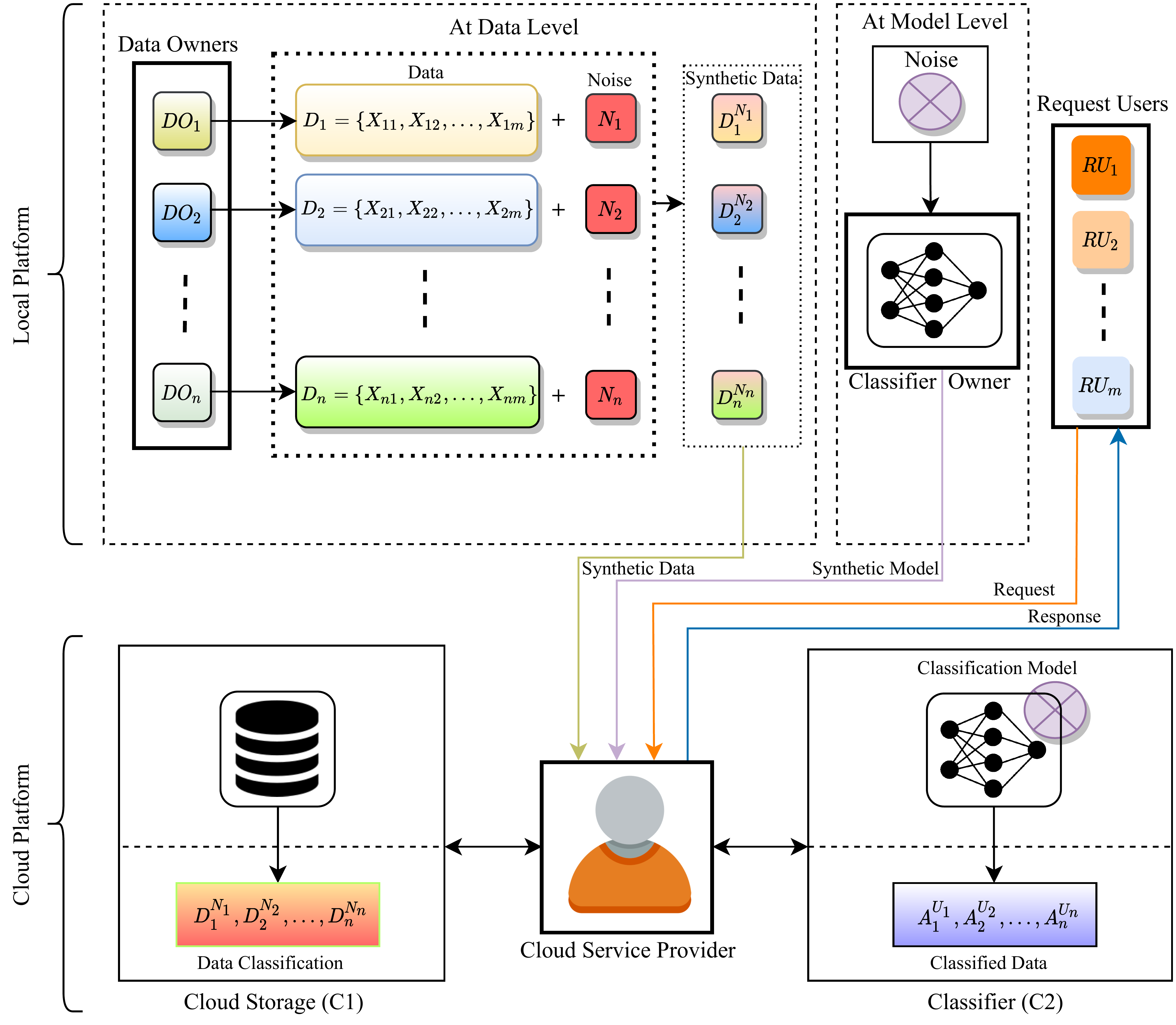}
		\caption{Proposed DA-PMLM architecture }
		\label{fig:CPABE24}
	\end{center}
\end{figure*}
 \par
 Consider the data owners $\mathbb{DO}$ = \{$DO_{1}$, $DO_{2}$, $\dots$, $DO_{n}$\} having data $\mathbb{D}$, such as $\mathbb{D}$ $\in$ \{$D_{1}$, $D_{2}$, $\dots$, $D_{n}$\}. It is essential that $\mathbb{DO}$ must share $\mathbb{D}$ among the other parties, such as \textit{$CSP$}, and several request users $\mathbb{RU}$ = \{$RU_{1}$, $RU_{2}$, $\dots$, $RU_{m}$\} for storage, computation, and data usage purposes. Due to the sensitive data, $\mathbb{DO}$ do not aspire to disclose $\mathbb{D}$ to unauthorized parties. 
 As a result, each $\mathbb{DO}$ procures synthetic data $\mathbb{D^{N}}$ = \{$D_{1}^{N_{1}}$, $D_{2}^{N_{2}}$, $\dots$, $D_{n}^{N_{n}}$\} by injecting noise $\mathbb{N}$ = \{$N_{1}$, $N_{2}$, $\dots$, $N_{n}$\} into sensitive data \{$D_{1}$, $D_{2}$, $\dots$, $D_{n}$\} using $\epsilon$-differential privacy to make the data private before sharing (details discussion in Section 4). DA-PMLM makes the use of Differential Privacy (\textit{DP}) mechanism to inject $N_{1}$, $N_{2}$, $\dots$, $N_{n}$ into $D_{1}$, $D_{2}$, $\dots$, $D_{n}$, respectively, because \textit{DP} is the most appropriate mechanism for generating $N_{1}$, $N_{2}$, $\dots$, $N_{n}$ and sharing data $D_{1}^{N_{1}}$, $D_{2}^{N_{2}}$, $\dots$, $D_{n}^{N_{n}}$ without sacrificing or revealing sensitive and personal information about individual owners \cite{chen2014correlated}. Each $DO_{1}$, $DO_{2}$, $\dots$, $DO_{n}$ delivers $D_{1}^{N_{1}}$, $D_{2}^{N_{2}}$, $\dots$, $D_{n}^{N_{n}}$ to \textit{$CSP$} that store it on \textit{CS} at \textit{C1} for computation, and sharing purposes. Due to the lack of storage space and the need to handle multiple requests, \textit{$CO$} aspires to store \textit{CM} on the cloud. Nevertheless, \textit{$CO$} injects noise $\mathbb{\hat{N}}$ into \textit{CM} to protect privacy before transferring it (as described in Section 5). \textit{$CO$} sends the synthetic model ($\textit{$CM^{N}$}$) to \textit{$CSP$} that keeps it on \textit{CF} at \textit{C2}. \textit{$CSP$} utilizes the stored data $D_{1}^{N_{1}}$, $D_{2}^{N_{2}}$, $\dots$, $D_{n}^{N_{n}}$ to accomplish the classification tasks to make a fit $\textit{$CM^{N}$}$, and also share it with $RU_{1}$, $RU_{2}$, $\dots$, $RU_{m}$. 
\textit{$CSP$} also uses the stored model $\textit{$CM^{N}$}$ to handle the query of $DO_{id}$, and $RU_{id}$. $DO_{1}$, $DO_{2}$, $\dots$, $DO_{n}$ and  $RU_{1}$, $RU_{2}$, $\dots$, $RU_{m}$ can make any query and send it to \textit{$CSP$}. 
 Afterward, \textit{$CSP$} interacts with \textit{CF} and achieves the results of these queries from $\textit{$CM^{N}$}$. The obtained results are sent to the corresponding entity $DO_{1}$, $DO_{2}$, $\dots$, $DO_{n}$, and  $RU_{1}$, $RU_{2}$, $\dots$, $RU_{m}$ by \textit{$CSP$}. 
\section{Privacy-Preserving of Data}
Data owners ($DO_{1}$, $DO_{2}$, $\dots$, $DO_{n}$) injects the noise ($N_{1}$, $N_{2}$, $\dots$, $N_{n}$) into data ($D_{1}$, $D_{2}$, $\dots$, $D_{n}$), respectively, using $\epsilon$-differential privacy. A random function $\hat{R}$ is referred to as $\epsilon$-differential privacy \cite{mcsherry2009privacy} if for any combination of dataset $D$ and its neighbor $D^{'}$, and for all $\vartheta$ $\subseteq$ Range($\hat{R}$) are defined using Eq. (1):  
\begin{equation}
Pr[\hat{R}(D) = \vartheta] \leq exp(\epsilon) \times Pr[\hat{R}(D^{'}) = \vartheta]
\end{equation}
where $Pr[.]$ is the probability function that has been applied to the function $\hat{R}$ and demonstrates the privacy disclosure risk. The privacy budget ($\epsilon$) is defined by the results of statistical computation, analysis, and the individual's information that has to be kept private. To achieve a higher level of privacy protection, the value of $\epsilon$ should be reduced. A numeric query function ($f$) that is utilized to map the data set $D$ into real space with $d$-dimensional, assigned as $f$: $D$ $\rightarrow$ $R^{d}$. The sensitivity of $f$ for all the combinations of dataset $D$ and $D^{'}$ is defined using Eq. (2):
\begin{equation}
\Delta f = \max\limits_{D, D^{'}} \parallel f(D) - f (D^{'}) \parallel_{P_{1}}
\end{equation}
where $\parallel$ . $\parallel_{P_{1}}$ is norm. The sensitivity $\Delta f$ is only depend on the query function $f$ and finds the maximum gap of query results on neighboring datasets. For any function  $f$: $D$ $\rightarrow$ $R^{d}$, the Laplace mechanism $F$ is defined using Eq. (3). The mechanism $F$ takes input $\chi$, and $\epsilon$ $>$ 0, a numeric query function $f$, and computes the output. It is based on the sensitivity of $f$ and receives the statistical noise from a Laplace distribution. Afterward, the obtained noise is injected into the datasets.
\begin{equation}
F(\chi) = f(\chi) + ( Lap_{1}(\varpi), Lap_{2}(\varpi), \dots, Lap_{d}(\varpi))
\end{equation}
where the noise $Lap_{j}$($\varpi$) ($j$ $\in$ [$1$, $d$]), and $\varpi$ belongs to $R^{+}$ comes from a Laplace distribution, whose the probability density function is calculated using Eq. (4). 
\begin{equation}
    N = \frac{1}{2\varpi} \cdot (exp(\frac{-|\chi|}{\varpi}))
\end{equation}
where $N$ is a noise vector. The noise $N_{1}$, $N_{2}$, $\dots$, $N_{n}$ is generated taking the sample from the Laplace distribution with scaling parameter $\varpi$ = $\Delta f/\epsilon$. The parameter $\varpi$ is under the control of $\epsilon$. The generated noise vector $N_{1}$, $N_{2}$, $\dots$, $N_{n}$ is injected into the corresponding $D_{1}$, $D_{2}$, $\dots$, $D_{n}$ as $D_{i}^{N_{i}}$ = $D_{i}$ + $N_{i}$, where $i$ $\in$ [$1$, $n$]. After adding $N_{1}$, $N_{2}$, $\dots$, $N_{n}$, $DO_{1}$, $DO_{2}$, $\dots$, $DO_{n}$ obtains the Synthetic Data $D_{1}^{N_{1}}$, $D_{2}^{N_{2}}$, $\dots$, $D_{n}^{N_{n}}$. $DO_{1}$, $DO_{2}$, $\dots$, $DO_{n}$ sends data $D_{1}^{N_{1}}$, $D_{2}^{N_{2}}$, $\dots$, $D_{n}^{N_{n}}$ to \textit{$CSP$} that stores it on \textit{CS} at \textit{C1}, shares it among $RU_{1}$, $RU_{2}$, $\dots$, $RU_{m}$ and performs the classification task over it.   
\section{Privacy-Preserving of Classification Model}
To protect the privacy of \textit{$CM$}, \textit{$CO$} injects the noise $\mathbb{\hat{N}}$ into \textit{$CM$} before sharing it on the cloud. In DA-PMLM, \textit{$CO$} has a Naive Bayes (\textit{NB}) classifier, which is used to perform the classification tasks.    
The \textit{NB} classifier takes the input vector $\mathbb{D^{N}}$ = \{$x_{1}$, $x_{2}$, $\dots$, $x_{d}$\}, which belongs to $R^{d}$, and classify $d$ into a class in the form of discrete set $\mathbb{C}$ = \{$C_{1}$, $C_{2}$, $\dots$, $C_{n}$\}. The functionality of the \textit{NB} classifier based on the Bayes decision rule \cite{berrar2018bayes} is to select the class $\mathbb{C}$ with the highest posterior probability of $\mathbb{D^{N}}$ and measures it by using Eq. (5). 
\begin{equation}
 \mathbb{C^{L}} =  \argmax_{i \in n} p(\mathbb{C} = C_{i}|\mathbb{D^{N}}=x)
\end{equation}
The highest posterior probability's value ($p(\mathbb{C} = C_{i}, \mathbb{D^{N}}=x)$) is equal to the factorization of $p(\mathbb{C}$ = $C_{i})$ $\prod_{j=1}^{d}$ $P(\mathbb{D}^{N}=x_{j}|\mathbb{C}=C_{i})$. Each component of $x$ is conditionally independent based on the attribute conditional assumption. It can also be expressed as ($log$ $p(\mathbb{C} = C_{i})$ = $\sum_{j=1}^{d}$ $log$ $p(\mathbb{D}^{N}=x_{j}$)) in a logarithm form. The conditional probabilities $P(\mathbb{D}^{N}=x_{j}|\mathbb{C}=C_{i})$ is computed using the mean ($\mu_{i,j}$), and variance ($\sigma_{i,j}^{2}$), which are calculated for class $j$ using the values of attribute $X$ from the training set. As a result, the sensitivity of both $\mu_{i}$, and standard deviation ($\sigma_{i,j}$) are computed. It is assumed that the feature value of $\mathbb{D}_{j}^{N}$ is bounded in range [$g_{i}$, $h_{i}$]. The sensitivity of $\mu_{i,j}$ is calculated using ($h_{i}$ - $g_{i}$)/($n$+1), and the sensitivity of $\sigma_{i,j}$ is computed using $n$ $*$ ($h_{i}$ - $g_{i}$) / ($n$+1). After calculating the $\mu_{i,j}$ and $\sigma_{i,j}$, \textit{$CO$} generates the noise using the Eqs.1 to 4. The obtained noise $\hat{N}_{i}$ and $\hat{\hat{N}}_{i}$ are injected into $\mu_{i}$ and $\sigma_{i}$, respectively, using Eqs. 6 and 7.   
\begin{equation}
   \mu_{i}^{'} =  \mu_{i} + \hat{N}_{i}
\end{equation}
\begin{equation}
   \sigma_{i}^{'} = \sigma_{i} + \hat{\hat{N}}_{i}
\end{equation}
\textit{$CO$} uses $\mu_{i}^{'}$, and $\sigma_{i}^{'}$ to calculate $p(\mathbb{C} = C_{i}, \mathbb{D^{N}}=x)$. Afterword, synthetic model $\textit{$CM^{N}$}$ is sent to \textit{$CSP$} to handle the queries of $DO_{1}$, $DO_{2}$, $\dots$, $DO_{n}$, and $RU_{1}$, $RU_{2}$, $\dots$, $RU_{m}$. Therefore, unauthorized users or any attacker are  unable to acquire the sensitive classification model from \textit{$CSP$}; the proof is provided in the following theorem:        
 \begin{theorem}
In the proposed model, the privacy-preserving mechanism of the classification model satisfies the parallel composition of $\epsilon$-differential privacy.
	\end{theorem}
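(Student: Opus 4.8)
The plan is to establish the claim through the standard parallel-composition theorem for $\epsilon$-differential privacy, which states that if $M_1, \ldots, M_k$ are mechanisms such that each $M_t$ satisfies $\epsilon_t$-differential privacy and each operates on a disjoint subset of the underlying data, then the combined release satisfies $(\max_t \epsilon_t)$-differential privacy. First I would recall this theorem and fix notation matching Eqs. (1)--(4): the Laplace mechanism with scale $\varpi = \Delta f / \epsilon$ applied to a query of sensitivity $\Delta f$ yields $\epsilon$-differential privacy, a fact already assembled in Section 4.

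Next I would expose the partition structure inherent in the Naive Bayes training stage. The classifier estimates, for every class $C_i$, the per-attribute mean $\mu_i$ and standard deviation $\sigma_i$ from exactly those training records carrying the label $C_i$. Because every record carries a single class label, the record sets underlying distinct classes $C_i$ and $C_{i'}$ are disjoint. Hence the perturbations $\mu_i^{'} = \mu_i + \hat{N}_i$ and $\sigma_i^{'} = \sigma_i + \hat{\hat{N}}_i$ of Eqs. (6)--(7) act on disjoint slices of the data, which is precisely the hypothesis demanded by parallel composition.

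Then I would verify the per-class guarantee and aggregate. Using the sensitivities stated just before the theorem --- $\Delta \mu_{i,j} = (h_i - g_i)/(n+1)$ for the mean and $\Delta \sigma_{i,j} = n(h_i - g_i)/(n+1)$ for the standard deviation --- the noise drawn from the Laplace density of Eq. (4) with the corresponding scale makes each released statistic $\epsilon$-differentially private by the Laplace-mechanism property recalled above. Invoking parallel composition over the disjoint class partitions then yields that the full synthetic model $CM^{N}$ satisfies $\epsilon$-differential privacy, the effective budget being the maximum budget spent on any single class.

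The hard part will be handling the fact that two statistics --- the mean and the standard deviation --- are released for the \emph{same} per-class subset, which is formally a case of sequential rather than parallel composition \emph{within} a class. To keep the overall guarantee pinned at a single $\epsilon$, the cleanest route is to treat the pair $(\mu_i, \sigma_i)$ as one vector-valued query and bound its combined $\ell_1$-sensitivity, or equivalently to split the budget between the two releases so their sum equals $\epsilon$; either device lets the disjoint-class argument carry the maximum over classes through parallel composition without inflating the budget. I would also confirm that the bounded-range assumption $\mathbb{D}_j^{N} \in [g_i, h_i]$ is exactly what legitimises the finite sensitivities, since an unbounded attribute would break the Laplace-mechanism step on which every per-class bound rests.
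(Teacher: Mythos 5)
Your proposal follows the same skeleton as the paper's proof --- per-class Laplace mechanisms combined via the parallel-composition theorem, with the effective budget being $\max_{1 \le i \le n} \epsilon$ --- but your version is considerably more complete than what the paper actually writes. The paper's proof is essentially a restatement of the parallel-composition theorem followed by an assertion: it posits mechanisms $\hat{R}_1, \dots, \hat{R}_n$ acting on the $\mu_i$ and $\sigma_i$, claims the composed sequences provide $\max_i \epsilon$-differential privacy, notes that the noise is Laplace noise calibrated to $\Delta f$ and $\epsilon$, and concludes. It never verifies the hypothesis that makes the composition \emph{parallel} rather than sequential, namely that the mechanisms act on disjoint subsets of the data; your observation that the per-class statistics of Naive Bayes are computed from disjoint record sets (each record carries exactly one label) is precisely the missing justification, and it is the conceptual heart of why the theorem's conclusion mentions parallel composition at all. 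More importantly, the difficulty you flag in your last paragraph --- that $\mu_i$ and $\sigma_i$ are two releases computed from the \emph{same} per-class slice, so their combination is sequential composition within a class and naively costs $2\epsilon$ --- is a genuine gap in the paper's own argument, which the paper silently ignores by treating the $\mu$-sequence and the $\sigma$-sequence as if they were independent parallel branches. Your two proposed repairs (bounding the joint $\ell_1$-sensitivity of the vector query $(\mu_i, \sigma_i)$, or splitting the per-class budget so the two releases sum to $\epsilon$) are both standard and either one would make the stated $\epsilon$ bound actually hold; without such a device, the mechanism as described satisfies only $2\epsilon$-differential privacy per class. In short, your proof is correct modulo carrying out that repair, and it is more rigorous than the proof it would replace.
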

\begin{proof}
Let $\hat{R}_{1}$, $\hat{R}_{2}$, $\dots$, $\hat{R}_{n}$ be $n$ mechanisms, where each mechanism $\hat{R}_{i}$ ($i$ $\in$ [1, $n$]) provides $\epsilon$-differential privacy. Let $\mu_{1}$, $\mu_{2}$, $\dots$, $\mu_{n}$ and $\sigma_{1}$, $\sigma_{2}$, $\dots$, $\sigma_{n}$ are $n$ arbitrary mean and standard deviation to calculate the conditional probability attributes. For a new mechanism $\hat{R}$, the sequence of $\hat{R}$($\hat{R}_{1}$($\mu_{1}$), $\hat{R}_{2}$($\mu_{2}$), $\dots$, $\hat{R}_{n}$($\mu_{n}$)) and $\hat{R}$($\hat{R}_{1}$($\sigma_{1}$), $\hat{R}_{2}$($\sigma_{2}$), $\dots$, $\hat{R}_{n}$($\sigma_{n}$)) provide $max_{1 \leq i \leq n}$ $\epsilon$- differential privacy.
$CO$ generates the Laplace noise $\hat{N}_{i}$ and $\hat{\hat{N}}_{i}$ according to sensitivity ($\Delta f$), privacy budget ($\epsilon$) to inject the produced noise into $\mu_{1}$, $\mu_{2}$, $\dots$, $\mu_{n}$ and $\sigma_{1}$, $\sigma_{2}$, $\dots$, $\sigma_{n}$, respectively, which satisfy
 $Pr$[$\hat{R}(\mu)$ = $\vartheta$] $\leq$ $exp$($\epsilon$) $\times$ $Pr$[$\hat{R}(\mu^{'})$ = $\vartheta$], and $Pr$[$\hat{R}(\sigma)$ = $\vartheta$] $\leq$ $exp$($\epsilon$) $\times$ $Pr$[$\hat{R}(\sigma^{'})$ = $\vartheta$].
 It can be observed that the mechanism $\hat{R}$ provides privacy to the classification model. Hence this privacy-preserving mechanism of the classification model satisfies the parallel composition of $\epsilon$-differential privacy.
\end{proof}
\section{Data Classification}
\textit{$CSP$} receives the synthetic data $D_{1}^{N_{1}}$, $D_{2}^{N_{2}}$, $\dots$, $D_{n}^{N_{n}}$ from $DO_{1}$, $DO_{2}$, $\dots$, $DO_{n}$ and prepossesses it by applying the normalization function shown in Eq. (8), where $D_{i}^{N_{i}}$ is training sample, $\mu$, and $\sigma$ are the mean and the standard deviation of the training sample, respectively. 
\begin{equation}
    \Check{D}_{i}^{N_{i}} = \frac{(D_{i}^{N_{i}} - \mu)}{\sigma}
\end{equation}
\textit{$CSP$} achieves the prepossessed data $\Check{\mathbb{D}}_{i}^{N_{i}}$ = \{$\Check{D}_{1}^{N_{1}}$, $\Check{D}_{2}^{N_{2}}$, $\dots$, $\Check{D}_{n}^{N_{n}}$\}, which belongs to $n^{*}$ is less than equal to $n$ classes $\mathbb{C}$ = \{$C_{1}$, $C_{2}$, $\dots$, $C_{n^{*}}$\}, for all $\mathbb{C}_{i}$ = $\mathbb{D}$, $i$ $\in$ [$1$, $n^{*}$] and $\mathbb{C}_{i}$ $\cap$ $\mathbb{C}_{k}$ = $\phi$, where $k$ $\in$ [$1$, $n^{*}$] $\wedge$  $i \neq k$. Fig. 3 depicts the stepwise process for data classification in which the prepossessed data $\Check{\mathbb{D}}_{i}^{N_{i}}$ = \{$\Check{D}_{1}^{N_{1}}$, $\Check{D}_{2}^{N_{2}}$, $\dots$, $\Check{D}_{n}^{N_{n}}$\} is split into training data $\Check{\mathbb{D}}_{\grave{t}}^{N}$ = \{$\Check{D}_{\grave{t},1}^{N_{1}}$, $\Check{D}_{\grave{t},2}^{N_{2}}$, $\dots$, $\Check{D}_{\grave{t},n^{**}}^{N_{n^{**}}}$\}, and testing data $\Check{\mathbb{D}}_{\Grave{\Grave{t}}}^{N}$ = \{$\Check{D}_{\Grave{\Grave{t}},1}^{N_{1}}$, $\Check{D}_{\Grave{\Grave{t}},2}^{N_{2}}$, $\dots$, $\Check{D}_{\Grave{\Grave{t}},n^{***}}^{N_{n^{***}}}$\} fulfill the following conditions a) $\Check{\mathbb{D}}_{\grave{t}}^{N}$ $\cup$ $\Check{\mathbb{D}}_{\Grave{\Grave{t}}}^{N}$ = $\Check{\mathbb{D}}^{N}$; b) $\Check{\mathbb{D}}_{\grave{t}}^{N}$ $\cap$ $\Check{\mathbb{D}}_{\Grave{\Grave{t}}}^{N}$ = $\phi$; c) $n^{**}$, $n^{***}$ $\leq$ $n$; and d) $n^{**}$ = $n \times z$, and $n^{***}$ = $n \times (1 - z)$, where $z$ belongs to $\mathbb{Z}$ or the value of $z$ can lie between 0 and 1 for $\textit{$CM^{N}$}$. 
 	\begin{figure*}[!htbp]
	\begin{center}
	\includegraphics[width=0.85\textwidth]{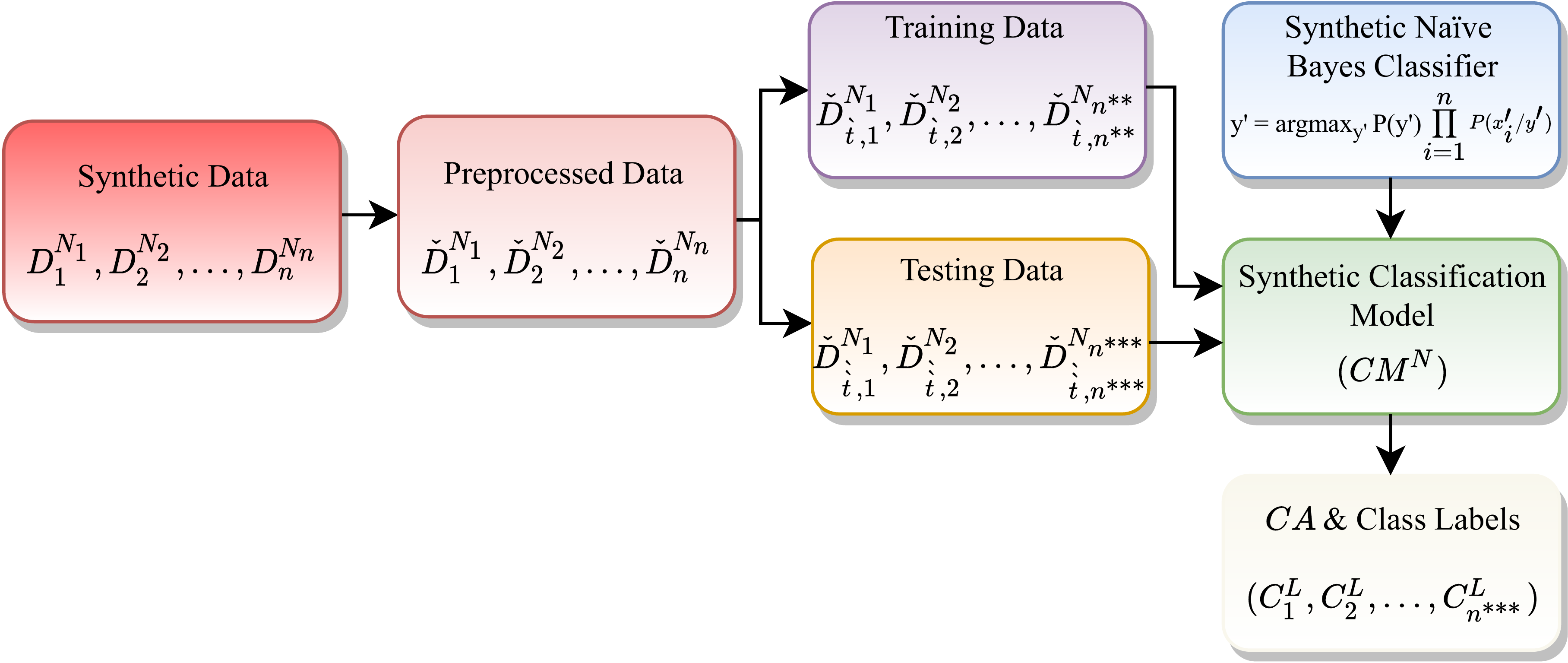}
		\caption{Classification flow for shared data and model}
		\label{fig:FC45}
	\end{center}
\end{figure*}
The utilization of training data $\Check{D}_{\grave{t},1}^{N_{1}}$, $\Check{D}_{\grave{t},2}^{N_{2}}$, $\dots$, $\Check{D}_{\grave{t},n^{**}}^{N_{n^{**}}}$ is to train $\textit{$CM^{N}$}$, whereas the accuracy of $\textit{$CM^{N}$}$ is assessed by the testing data $\Check{D}_{\Grave{\Grave{t}},1}^{N_{1}}$, $\Check{D}_{\Grave{\Grave{t}},2}^{N_{2}}$, $\dots$, $\Check{D}_{\Grave{\Grave{t}},n^{***}}^{N_{n^{***}}}$. Data objects $\Check{D}_{\Grave{\Grave{t}},1}^{N_{1}}$, $\Check{D}_{\Grave{\Grave{t}},2}^{N_{2}}$, $\dots$, $\Check{D}_{\Grave{\Grave{t}},n^{***}}^{N_{n^{***}}}$ are provided to $\textit{$CM^{N}$}$ in the testing process to determine the classes. $\textit{$CM^{N}$}$ examines $\Check{D}_{\Grave{\Grave{t}},1}^{N_{1}}$, $\Check{D}_{\Grave{\Grave{t}},2}^{N_{2}}$, $\dots$, $\Check{D}_{\Grave{\Grave{t}},n^{***}}^{N_{n^{***}}}$, and generates the output as the class label vector $\mathbb{C}^{L}$ = \{$C_{1}^{L}$, $C_{2}^{L}$, $\dots$, $C_{n^{***}}^{L}$\}. Using $C_{1}^{L}$, $C_{2}^{L}$, $\dots$, $C_{n^{***}}^{L}$, the Classification Accuracy (\textit{CA}) of $\textit{$CM^{N}$}$ is measured by applying the Eq. (9). In this Eq., $C_{1}^{L}$, $C_{2}^{L}$, $\dots$, $C_{n^{'''}}^{L}$ indicates the number of correctly identified samples and $C_{1}^{L}$, $C_{2}^{L}$, $\dots$, $C_{n^{***}}^{L}$ is the total number of sample in test data, where $n^{'''}$ $\in$ [1, $n^{***}$]. 
\begin{equation}
    CA = \frac{C_{1}^{L}, C_{2}^{L}, \dots, C_{n^{'''}}^{L}}{C_{1}^{L}, C_{2}^{L}, \dots, C_{n^{***}}^{L}}
\end{equation}
The precision (\textit{P}), and recall (\textit{R}) are computed using Eq. (10) and (11), respectively.
\begin{equation}
    P = \frac{(C_{1}^{L}, C_{2}^{L}, \dots, C_{n^{'''}}^{L}) \cap (C_{1}^{L}, C_{2}^{L}, \dots, C_{n^{***}}^{L})}{C_{1}^{L}, C_{2}^{L}, \dots, C_{n^{'''}}^{L}}
\end{equation}
\begin{equation}
    R = \frac{(C_{1}^{L}, C_{2}^{L}, \dots, C_{n^{'''}}^{L}) \cap (C_{1}^{L}, C_{2}^{L}, \dots, C_{n^{***}}^{L})}{C_{1}^{L}, C_{2}^{L}, \dots, C_{n^{***}}^{L}}
\end{equation}
The F1-Score (\textit{FS}) is measured using Eq. (12).
\begin{equation}
    FS = 2 \times \frac{ (P \times R)}{(P+R)}
\end{equation}
\section{Illustration}
Let's assume that the proposed model consists of twenty owners: \textit{$DO_{id}$} = \{$DO_{1}$, $DO_{2}$, $\dots$, $DO_{20}$\} possess data \textit{$D$} = \{$D_{1}$, $D_{2}$, $\dots$, $D_{20}$\}. Data $D_{1}$ contains the information (4.2575, 7.1754, -1.7685, -2.6898, 0), $D_{2}$ contains the information (0.7536, 0.9137, 6.1110, -4.0517, 0), $\dots$, $D_{20}$ contains the information (6.2122, 7.6552, -2.4515, -1.4451, 0). Before sending it to \textit{$CSP$}, $DO_{1}$, $DO_{2}$, $\dots$, $DO_{20}$ generate the noise $N_{1}$, $N_{2}$, $\dots$, $N_{20}$ by applying the Eqs. (1) to (4). The obtained noise $N_{1}$ incorporates 0.6569, $N_{2}$ incorporates 0.8015, $\dots$, $N_{20}$ incorporates 0.6959. Table 3 shows the synthetic data (\textit{SD}) that are produced after injecting $N_{1}$, $N_{2}$, $\dots$, $N_{20}$ into actual data (\textit{AD}).
\begin{table}[!htbp]
\setcounter{table}{2}
\centering
	\caption{Synthetic data v/s Actual data}
		\label{TableExp64}
\resizebox{\textwidth}{!}{\begin{tabular}{|c|c|c|c|c|c|c|c|c|c|}\hline 
\multirow{2}{*}{Data Owners} & \multicolumn{2}{ c| }{$\Lambda_{1}$} & \multicolumn{2}{ c| }{$\Lambda_{2}$} & \multicolumn{2}{ c|}{$\Lambda_{3}$} & \multicolumn{2}{ c|}{$\Lambda_{4}$} & \multirow{2}{*}{$\Lambda_{5}$} \\ \cline{2-9}
& $SD_{1}$ & $AD_{1}$ & $SD_{2}$ & $AD_{2}$ & $SD_{3}$ & $AD_{3}$ & $SD_{4}$ & $AD_{4}$ & \\  \hline
$DO_{1}$ &  4.9144 & 4.2575 & 7.8223 & 7.1754 & -1.1116 & -1.7685 & -2.0329 & -2.6898   & 0\\ \hline
$DO_{2}$ & 1.5551 & 0.7536 & 1.7152 & 0.9137 & 6.9125 & 6.1110 & -3.2502 &  -4.0517 & 0\\ \hline
$DO_{3}$ & 1.3112 & 1.0978 & -2.6567 & -2.8701 & 2.8156 & 2.6022 & -1.8123 & -2.0257 & 0 \\ \hline
$DO_{4}$ & 2.9124 & 2.1312 & -2.3231 & -3.1043 & 3.4122 & 2.631 & -2.9158 & -3.6970 &  1 \\ \hline
$DO_{5}$ & 5.1451 & 4.9639 & 3.4256 & 3.2444 & -1.6823 & -1.8635 & 1.2151 & 1.0339 & 1 \\ \hline
$DO_{6}$ & 1.0132 & 0.7983 & 5.7588 & 5.5439 & 0.4724 & 0.2575 & -0.6125 & -0.8274 & 0 \\ \hline
$DO_{7}$ & 2.5156 & 1.9033 & 9.1772 & 8.5649 & -0.7336 & -1.3459 & -0.7353 & -1.3476 & 0 \\ \hline
$DO_{8}$ & 4.4157 & 4.2046 & 8.7779 & 8.5668 & -4.4035 & -4.6146 & -0.8064 & -1.0175 & 0 \\ \hline
$DO_{9}$ & 1.9877 & 1.6659 & 1.1038 & 0.7820 & 2.3946 & 2.0728 & 0.8629 & 0.5411 & 1 \\ \hline
$DO_{10}$ & 2.9935 & 2.5039 & 7.6625 & 7.1729 & 0.1539 & -0.3357 & -1.0175 & -1.5071 & 1 \\ \hline
$DO_{11}$ & -2.6128  & -2.9317 & 10.8430 & 10.5241 & 2.5462 & 2.2273 & -2.7926 & -3.1115 & 0 \\ \hline
$DO_{12}$ & 2.4121 & 2.012 & 8.7261 & 8.3260 & -3.0030 & -3.4031 & -0.5724 & -0.9725 & 0 \\ \hline
$DO_{13}$ & 3.6651 & 2.7939 & -3.3924 & -4.2636 & 3.4896 & 2.6184 & 1.4771 & 0.6059 & 0 \\ \hline
$DO_{14}$ & 1.7219 & 1.5800& 3.0646 & 2.927 & 0.8747 & 0.7328 & 0.5861 & 0.4442 & 1 \\ \hline
$DO_{15}$ & 1.4512 & 1.2401 & 2.8473 & 2.6362 & 4.3439 & 4.1328 & 1.6524 &  1.4413 & 1 \\ \hline
$DO_{16}$ & 2.7159 & 2.5497 & -4.0257 & -4.1919 & 8.3428 & 8.1766 & -2.1086 & -2.2748 & 0 \\ \hline
$DO_{17}$ & 3.2153 & 2.7866 & 11.0272 & 10.5985 & -2.3564 & -2.7851 & -2.1113 & -2.5400 & 0 \\ \hline
$DO_{18}$ & 1.7121 & 1.5903 & 7.8902 & 7.7684 & -1.3663 & -1.4881 & -1.5650 & -1.6868 & 0 \\ \hline
$DO_{19}$ & 1.2192 & 0.9812 & -1.1117 & -1.3496 & 2.7118 & 2.4739 & -1.4445 & -1.6824 & 1 \\ \hline
$DO_{20}$ & 6.9081 & 6.2122 & 8.3511 & 7.6552 & -1.7556 & -2.4515 & -0.7492 & -1.4451 & 1 \\ \hline
\end{tabular}}
\end{table}
To preserve the privacy of \textit{CM}, \textit{$CO$} also generates the noise $\hat{N}_{1}$, $\hat{N}_{2}$ using Eqs. (1) to (4). It is supposed that the value of $\hat{N}_{1}$ is 0.5654, and $\hat{N}_{2}$ is 0.3651. Initially, \textit{$CO$} measures the sensitivity of mean ($\mu$), and standard deviation ($\sigma$) by adopting Eq. (5). It is observed that the value of $\mu$ is 16.3258, and the value of $\sigma$ is 9.6894. Afterward, $\hat{N}_{1}$, and $\hat{N}_{2}$ are injected into $\mu$ and $\sigma$, respectively, by applying Eqs. (6) and (7). The synthetic model \textit{$CM^{N}$}, and \textit{SD} are uploaded to cloud platform. \textit{$CSP$} preprocesses \textit{SD} using Eq. (8) and acquires the preprocessed data as $\Check{D}^{N}$ as (1.2456, 2.4523, -0.4452, 1.3425), (-2.5434, -0.8145, 1.9315, 1.3155), $\dots$, (-1.3141, 2.5145, -1.3145, -1.0485). The \textit{$CM^{N}$} accepts 30 input data, and provides 30 output labels. The \textit{CA} of \textit{$CM^{N}$} is calculated by using Eq. (9). If \textit{$CM^{N}$} identifies 18 correct labels then \textit{$CM^{N}$}'s \textit{CA} is 60\% ((18/30) * 100), where 30 is the total number of test items. The \textit{P} of $\textit{$CM^{N}$}$ is measured by applying Eq. (10). If $C_{1}^{L}$, $C_{2}^{L}$, $\dots$, $C_{n^{'''}}^{L}$ is 10, $C_{1}^{L}$, $C_{2}^{L}$, $\dots$, $C_{n^{***}}^{L}$ is 30, and common labels among them is 6 then \textit{$CM^{N}$}'s \textit{P} is 60\% ((6/10) * 100). The \textit{R} of \textit{$CM^{N}$} is computed by adopting Eq. (11). The \textit{$CM^{N}$}'s \textit{R} is 20\% ((6/30) * 100). The \textit{FS} of \textit{$CM^{N}$} is estimated by using Eq. (12). The \textit{$CM^{N}$}'s \textit{FS} is 30\% (2 * (\textit{P} * \textit{R}) / (\textit{P} + \textit{R})), where \textit{P} is 60\%, and \textit{R} is 20\%.    
\section{Operational Design and Computational Complexity}
The operational summary of the proposed model is described by Algorithm 1, which protects the data as well as the model and performs the classification task. Initially, the list of data ($D$), noise ($N$), and preprocessed data $\Check{D}^{N}$ are initialized. To preserve the privacy of the owner's data $D$, $N$ is generated. The obtained $N$ is used to construct the synthetic data after injecting it into $D$. To protect the classification model, the sensitivity of $\mu$ is calculated. Based on this sensitivity, the scale factor is measured for $\mu$. Using the scale factor, the noise is generated and adds it into $\mu$. Similarly, the sensitivity of $\sigma$ is also computed, and the scale factor is determined using this sensitivity. 
\begin{algorithm}
\KwIn{Actual data $D$, Noise vector $N$, Privacy parameter $\epsilon$}
\KwOut{unknown class label $C_{i}^{L}$, \textit{CA}, \textit{P}, \textit{R}, and \textit{FS}}
Initialize data $D$ := \{$D_{1}$, $D_{2}$, $\dots$, $D_{n}$\}, $N$ := \{$N_{1}$, $N_{2}$, $\dots$, $N_{n}$\}, $\Check{D}_{1}^{N}$ = \{$(x_{1}, y_{1})$, $(x_{2}, y_{2})$, $\dots$, $(x_{\ddot{i}}, y_{\ddot{j}})$\}  \\
\textbf{for} $i$ = $1$, $2$, $\dots$, $n$ \textbf{do}\\
\hspace{.3cm} $N_{i}$ = Lap(0,1) \\
\hspace{.3cm}$D_{i}^{N}$ = $D_{i}$ + $N_{i}$ \\
\hspace{.3cm}\textbf{for} each attribute $X_{j}$ \textbf{do}\\
\hspace{.6cm} Compute the sensitivity $S^{\mu}$ with mean $\mu$ with bound [$g_{i}$, $h_{i}$] \\
\hspace{.6cm} Scale factor for $\mu$ ($Sf^{\mu}$) $\leftarrow$ $S^{\mu}$ / $\epsilon$ \\
\hspace{.6cm} $\mu_{i}^{'}$ = $\mu_{i}$ + Lap(0, $Sf^{\mu}$) \\ 
\hspace{.6cm} Compute the sensitivity $S^{\sigma}$ with standard deviation $\sigma$ with bound [$g_{i}$, $h_{i}$] \\
\hspace{.6cm} Scale factor for $\sigma$ ($Sf^{\sigma}$) $\leftarrow$ $S^{\sigma}$ / $\epsilon$ \\
\hspace{.6cm} $\sigma_{i}^{'}$ = $\sigma_{i}$ + Lap(0, $Sf^{\sigma}$) \\ 
\hspace{.6cm} Use $\mu_{i}^{'}$, and $\sigma_{i}^{'}$ to compute $P(\mathbb{D}^{N}=x_{j}|\mathbb{C}=C_{i})$ \\
\hspace{.3cm}\textbf{end for} \\ 
\hspace{.3cm}\textbf{for} each class $c_{i}$ \textbf{do}\\
\hspace{.6cm} Count $nc_{i}^{'}$ $\leftarrow$ $nc_{i}$ + Lap (0,1) \\
\hspace{.6cm} Use $nc_{i}^{'}$ to compute the prior $P(c_{i})$ \\
\hspace{.3cm}\textbf{end for} \\ 
\hspace{.3cm} $C_{i}^{L}$ = $argmax_{c}$ $P(C)$ $\prod$ $P(X_{j}\mid c)$\\
\textbf{end for} \\
\textit{CA} = (\#$Correctly$ $classified$ $sample$ /  \#$test$ $sample$) * 100 \\
\textit{P} = \{relevant data labels $\cap$ retrieved data labels\} / \{retrieved data labels\}\\
\textit{R} = \{relevant data labels $\cap$ retrieved data labels\} / \{relevant data labels\}\\
\textit{FS} = $2 * (\textit{P} * \textit{R}) / (\textit{P} + \textit{R})$
\caption{DA-PMLM operational summary}
\label{algo:bb}
\end{algorithm}
Based on this scale factor, noise is produced and added into $\sigma$. The synthetic $\mu$, and $\sigma$ are used to compute $P(\mathbb{D}^{N}=x_{j}|\mathbb{C}=C_{i})$. The test data is given and received class labels from the classification model. Using these class labels, the classification parameter values are measured in a privacy-preserving manner.
\par
In algorithm 1, steps 2 to 19 perform the classification over synthetic data using a synthetic model, whose time complexity depends on the noise generation, noise addition, and classifier use. To preserve the privacy of the data, noise is generated in step 3 and added using the Laplace mechanism in step 4, which takes $\mathcal{O}(n^{2})$ time, where $n$ is the total rows in the dataset. The classification model is protected using steps 5 to 18. The \textit{NB} classifier takes $\mathcal{O}(ni)$ time. The computation time for the sensitivity of $\mu$ and $\sigma$ is $\mathcal{O}(n^{2})$ time. Steps 20 to 23 require $\mathcal{O}(1)$ time. Therefore, the total time complexity of DA-PMLM is $\mathcal{O}(n^{2})$.
\section{Performance Evaluation}
\subsection{Experimental Setup}
The experiments are conducted on a system equipped with Intel (R) Core (TM) i5-4210U CPU and 3.60 GHz clock speed. The computation machine is deployed with 64-bit Ubuntu along with 8 GB of main memory. The classification tasks are accomplished using Python 2.7.15 programming language. The \textit{NB} classifier has been used over testing data. 
\subsection{Datasets and Classification Parameters}
 Heart Disease, Iris, Balance Scale, and Nursery datasets, all of which were obtained from the UCI Machine Learning Repository \cite{song2012automatic} to train \textit{$CM$}. These datasets have 75, 5, 5, 9 attributes and 303, 150, 625, 12960 instances. The description of the datasets is shown in Table 4. 
 \par
 The 9/10 of the data from the complete dataset is utilized as training data, while the remaining is used as test data for the purpose of training the \textit{CM}. The machine learning tasks are accomplished over the clean data, DA-PMLM, PMLM \cite{li2018privacy}, MLPAM \cite{gupta2020mlpam}, DPEL \cite{LI202134}, and PDLM \cite{ma2018pdlm}. We have used the Laplace mechanism to generate the noise. However, DA-PMLM, PMLM \cite{li2018privacy}, MLPAM \cite{gupta2020mlpam}, as well as DPEL \cite{LI202134} schemes contain noise, and PDLM \cite{ma2018pdlm} scheme is based on a homomorphic encryption technique. The results of \textit{CM} are evaluated using test data, and \textit{CA}, \textit{P}, \textit{R}, and \textit{FS} are computed from these results. 
\begin{table}[!htbp]
\centering
		\caption{Basic information of four datasets}
		\label{TableExp17}
			\begin{tabular}{|c|c|c|c|c|c|}\hline 
Dataset  & \#Instances & \#Features & \#Classes & Samples in  & Samples in  \\ 
  &  &  &  &  training set &  test set\\ \hline
Heart Disease & 303 & 75 & 2 & 272 & 31\\ \hline
Iris & 150 & 5 & 3 & 135 & 15\\ \hline
Balance Scale & 625 & 5 & 3 & 562 & 63\\ \hline
Nursery  & 12960 & 9 & 5 & 11664 & 1296 \\ \hline
\end{tabular}
\end{table}
\subsection{Results}
The \textit{$CM$} acquires the classification results including \textit{CA}, \textit{P}, \textit{R}, and \textit{FS} over the clean data, DA-PMLM, PMLM \cite{li2018privacy}, MLPAM \cite{gupta2020mlpam}, DPEL \cite{LI202134}, and PDLM \cite{ma2018pdlm} as shown in Figs. 4(a)-(d) to 7(a)-(d). In DA-PMLM, the maximal value of \textit{CA} is 93.33\% on the Iris dataset with 2.0 privacy budget. The minimal value of \textit{CA} is 33.87\% on the Nursery dataset with 0.01 privacy budget. The average value of \textit{CA} is 70.04\%, 74.97\%, 61.44\%, and 46.28\% over Heart Disease, Iris, Balance Scale, and Nursery datasets, respectively. The greatest value of \textit{P} is 94.44\% on the Iris dataset with 2.0 privacy budget. The lowest value of \textit{P} is 31.57\% on the Nursery dataset with 0.01 privacy budget. The average value of \textit{P} is 70.53\%, 72.67\%, 59.66\%, and 44.89\% over Heart Disease, Iris, Balance Scale, and Nursery datasets, respectively. The maximum value of \textit{R} is 93.33\% on the Iris dataset with 2.0 privacy budget.
\begin{figure}[!htbp]
\centering
\begin{subfigure}[t]{0.49\textwidth}
\begin{tikzpicture}[node distance = 1cm,auto,scale=.70, transform shape]
\pgfplotsset{every axis y label/.append style={rotate=180,yshift=10.5cm}}
\begin{axis}[
      axis on top=false,
      xmin=15, xmax=145,
      ymin=0, ymax=0.9,
      xtick={22,42,62,80,100,120,138},
      xticklabels={0.01,0.05, 0.1,0.5,1.0,1.5,2.0},
        ycomb,
        ylabel near ticks, yticklabel pos=left,
      ylabel={Accuracy},
      xlabel={Privacy Budget},
      legend style={at={(0.5,-0.18)},
      anchor=north,legend columns=2},
      ymajorgrids=true,
      grid style=dashed,
          ]
\addplot+[mark options={fill=blue},fill=blue!40!,draw=blue,  thick]
coordinates
{(17,.8387) (36,.8387) (55,.8387) (74,.8387) (93,.8387) (112,.8387) (131,.8387)}
\closedcycle;%
\addlegendentry{Clean Data}

\addplot+[mark options={fill=green},fill=pink,draw=green,  thick] 
coordinates
 {(20,.6129) (39,.6229) (58,.6885) (77,.7377) (96,.7540) (115,.7868) (134,.8196)}
\closedcycle;%
\addlegendentry{PMLM \cite{li2018privacy}, MLPAM \cite{gupta2020mlpam}}

\addplot+[mark options={fill=white},fill=red!60!,draw=red!70!,  thick] 
coordinates
 {(23,.6129) (42,.6451) (61,.6774) (80,.7096) (99,.7419) (118,.7741) (137,.8064)}
\closedcycle;%
\addlegendentry{DPEL \cite{LI202134}}
\addplot+[mark options={fill=teal},fill=teal,draw=teal,  thick] 
coordinates
 {(26,.6777) (45,.6777) (64,.6777) (83,.6777) (102,.6777) (121,.6777) (140,.6777)}
\closedcycle;%
\addlegendentry{\hspace{-1.75cm} PDLM \cite{ma2018pdlm}}
\addplot+[mark options={fill=cyan},fill=cyan,draw=cyan,  thick] 
coordinates
 {(29,.5806) (48,.6129) (67,.6774) (86,.7096) (105,.7419) (124,.7419) (143,.7419)}
\closedcycle;%
\addlegendentry{DA-PMLM}
\end{axis}
\end{tikzpicture}
            \caption{Heart Disease}
        \end{subfigure}
                  \hfill
\begin{subfigure}[t]{0.49\textwidth}
\begin{tikzpicture}[node distance = 1cm,auto,scale=.70, transform shape]
\pgfplotsset{every axis y label/.append style={rotate=180,yshift=10.5cm}}
\begin{axis}[
      axis on top=false,
      xmin=15, xmax=145,
      ymin=0, ymax=1.1,
      xtick={22,42,62,80,100,120,138},
      xticklabels={0.01,0.05,0.1,0.5,1.0,1.5,2.0},
        ycomb,
        ylabel near ticks, yticklabel pos=left,
      ylabel={Accuracy},
      xlabel={Privacy Budget},
      legend style={at={(0.5,-0.18)},
      anchor=north,legend columns=2},
      ymajorgrids=true,
      grid style=dashed,
          ]
\addplot+[mark options={fill=blue},fill=blue!40!,draw=blue,  thick]
coordinates
{(17,.9333) (36,.9333) (55,.9333) (74,.9333) (93,.9333) (112,.9333) (131,.9333)}
\closedcycle;%
\addlegendentry{Clean Data}

\addplot+[mark options={fill=green},fill=pink,draw=green,  thick] 
coordinates
 {(20,.6000) (39,.6666) (58,.8000) (77,.8247) (96,.8666) (115,.9333) (134,1.0)}
\closedcycle;%
\addlegendentry{PMLM \cite{li2018privacy}, MLPAM \cite{gupta2020mlpam}}

\addplot+[mark options={fill=white},fill=red!60!,draw=red!70!,  thick] 
coordinates
 {(23,.6000) (42,.6666) (61,.7333) (80,.8000) (99,.8666) (118,.9333) (137,1.0)}
\closedcycle;%
\addlegendentry{DPEL \cite{LI202134}}

\addplot+[mark options={fill=teal},fill=teal,draw=teal,  thick] 
coordinates
 {(26,.7638) (45,.7638) (64,.7638) (83,.7638) (102,.7638) (121,.7638) (140,.7638)}
\closedcycle;%
\addlegendentry{\hspace{-1.75cm} PDLM \cite{ma2018pdlm}}

\addplot+[mark options={fill=cyan},fill=cyan,draw=cyan,  thick] 
coordinates
 {(29,.5333) (48,.6000) (67,.7333) (86,.7815) (105,.8000) (124,.8666) (143,.9333)}
\closedcycle;%
\addlegendentry{DA-PMLM }
\end{axis}
\end{tikzpicture}
            \caption{Iris}
        \end{subfigure}
                  \hfill
\begin{subfigure}[t]{0.49\textwidth}
\begin{tikzpicture}[node distance = 1cm,auto,scale=.70, transform shape]
\pgfplotsset{every axis y label/.append style={rotate=180,yshift=10.5cm}}
\begin{axis}[
      axis on top=false,
      xmin=15, xmax=145,
      ymin=0, ymax=0.9,
      xtick={22,42,62,80,100,120,138},
      xticklabels={0.01,0.05,0.1,0.5,1.0,1.5,2.0},
        ycomb,
        ylabel near ticks, yticklabel pos=left,
      ylabel={Accuracy},
      xlabel={Privacy Budget},
      legend style={at={(0.5,-0.18)},
      anchor=north,legend columns=2},
      ymajorgrids=true,
      grid style=dashed,
          ]
\addplot+[mark options={fill=blue},fill=blue!40!,draw=blue,  thick]
coordinates
{(17,.8571) (36,.8571) (55,.8571) (74,.8571) (93,.8571) (112,.8571) (131,.8571)}
\closedcycle;%
\addlegendentry{Clean Data}

\addplot+[mark options={fill=green},fill=pink,draw=green,  thick] 
coordinates
 {(20,.5396) (39,.5714) (58,.6013) (77,.6507) (96,.6984) (115,.7311) (134,.7936)}
\closedcycle;%
\addlegendentry{PMLM \cite{li2018privacy}, MLPAM \cite{gupta2020mlpam}}

\addplot+[mark options={fill=white},fill=red!60!,draw=red!70!,  thick] 
coordinates
 {(23,.5555) (42,.5873) (61,.6349) (80,.6666) (99,.6984) (118,.7301) (137,.8095)}
\closedcycle;%
\addlegendentry{DPEL \cite{LI202134}}

\addplot+[mark options={fill=teal},fill=teal,draw=teal,  thick] 
coordinates
 {(26,.7293) (45,.7293) (64,.7293) (83,.7293) (102,.7293) (121,.7293) (140,.7293)}
\closedcycle;%
\addlegendentry{\hspace{-1.75cm} PDLM \cite{ma2018pdlm}}

\addplot+[mark options={fill=cyan},fill=cyan,draw=cyan,  thick] 
coordinates
 {(29,.4920) (48,.5238) (67,.5714) (86,.6349) (105,.6666) (124,.6825) (143,.7301)}
\closedcycle;%
\addlegendentry{DA-PMLM}
\end{axis}
\end{tikzpicture}
            \caption{Balance}
        \end{subfigure}
                  \hfill
\begin{subfigure}[t]{0.49\textwidth}
\begin{tikzpicture}[node distance = 1cm,auto,scale=.70, transform shape]
\pgfplotsset{every axis y label/.append style={rotate=180,yshift=10.5cm}}
\begin{axis}[
      axis on top=false,
      xmin=15, xmax=145,
      ymin=0, ymax=0.8,
      xtick={22,42,62,80,100,120,138},
      xticklabels={0.01,0.05,0.1,0.5,1.0,1.5,2.0},
        ycomb,
        ylabel near ticks, yticklabel pos=left,
      ylabel={Accuracy},
      xlabel={Privacy Budget},
      legend style={at={(0.5,-0.18)},
      anchor=north,legend columns=2},
      ymajorgrids=true,
      grid style=dashed,
          ]
\addplot+[mark options={fill=blue},fill=blue!40!,draw=blue,  thick]
coordinates
{(17,.7615) (36,.7615) (55,.7615) (74,.7615) (93,.7615) (112,.7615) (131,.7615)}
\closedcycle;%
\addlegendentry{Clean Data}

\addplot+[mark options={fill=green},fill=pink,draw=green,  thick] 
coordinates
 {(20,.3495) (39,.3719) (58,.3888) (77,.4436) (96,.5979) (115,.6875) (134,.7438)}
\closedcycle;%
\addlegendentry{PMLM \cite{li2018privacy}, MLPAM \cite{gupta2020mlpam}}

\addplot+[mark options={fill=white},fill=red!60!,draw=red!70!,  thick] 
coordinates
 {(23,.3518) (42,.3827) (61,.4012) (80,.5169) (99,.6250) (118,.6628) (137,.7229)}
\closedcycle;%
\addlegendentry{DPEL \cite{LI202134}}

\addplot+[mark options={fill=teal},fill=teal,draw=teal,  thick] 
coordinates
 {(26,.6703) (45,.6703) (64,.6703) (83,.6703) (102,.6703) (121,.6703) (140,.6703)}
\closedcycle;%
\addlegendentry{\hspace{-1.75cm} PDLM \cite{ma2018pdlm}}

\addplot+[mark options={fill=cyan},fill=cyan,draw=cyan,  thick] 
coordinates
 {(29,.3387) (48,.3418) (67,.3536) (86,.3834) (105,.5054) (124,.6057) (143,.7121)}
\closedcycle;%
\addlegendentry{DA-PMLM }
\end{axis}
\end{tikzpicture}
            \caption{Nursery}
        \end{subfigure}
\caption{Accuracy of \textit{CM} in DA-PMLM}
\end{figure}
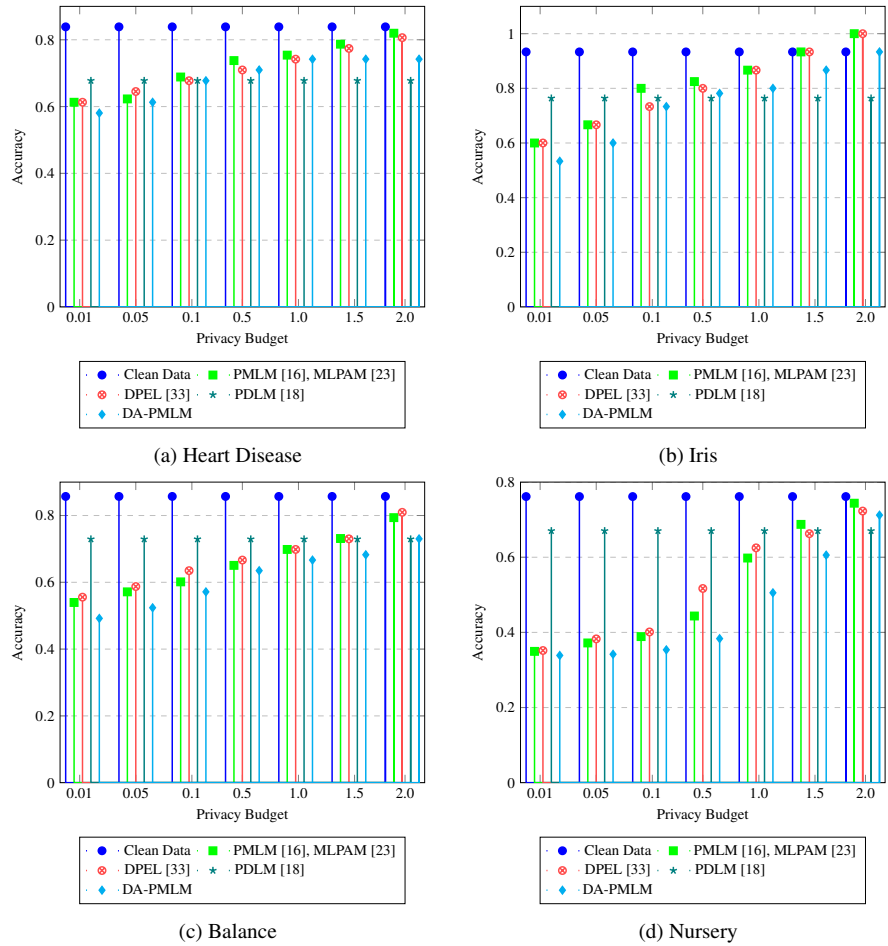 
 The minimum value of \textit{R} is 33.87\% on the Nursery dataset with 0.01 privacy budget. The average value of \textit{R} is 70.04\%, 74.97\%, 61.44\%, and 46.28\% over Heart Disease, Iris, Balance Scale, and Nursery datasets, respectively. The highest value of \textit{FS} is 93.33\% on the Iris dataset with 2.0 privacy budget. The lowest value of \textit{FS} is 28.66\% on the Nursery dataset with 0.01 privacy budget. The average value of \textit{FS} is 69.30\%, 73.13\%, 58.43\%, and 43.64\%  over Heart Disease, Iris, Balance Scale, and Nursery datasets, respectively. The performance of the datasets decreases in order: Iris $>$ Heart Disease $>$ Balance Scale $>$ Nursery datasets. 
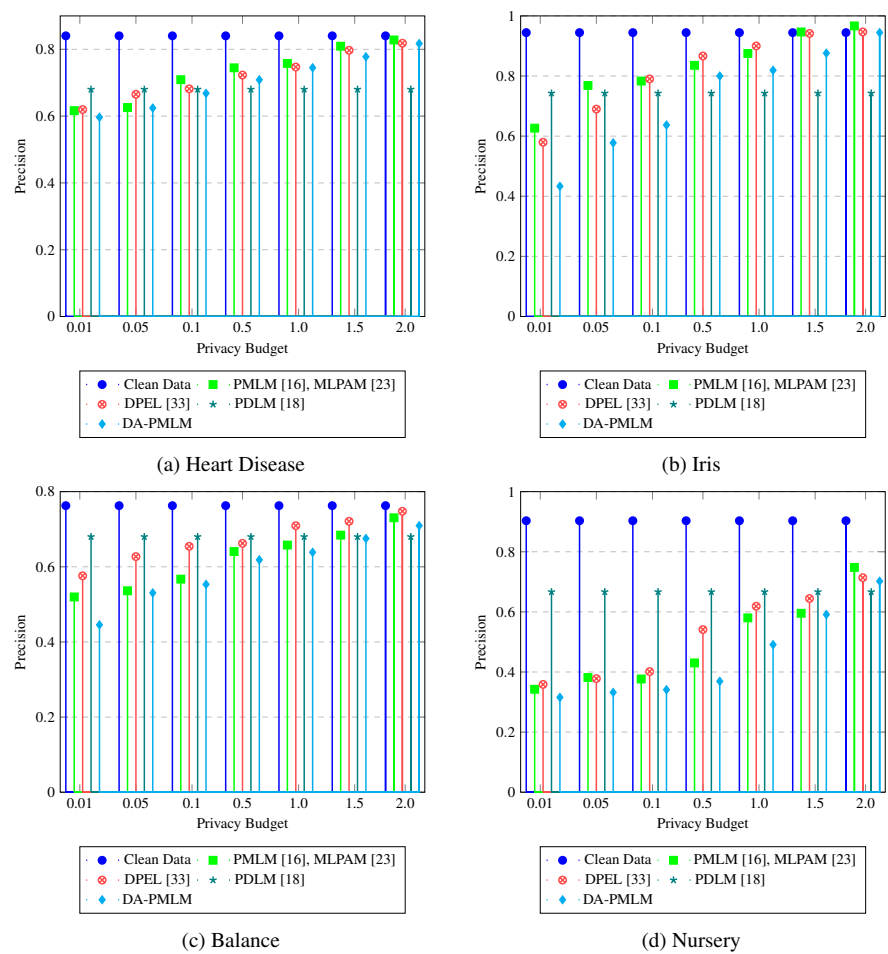
\begin{figure}[!htbp]
\centering
\begin{subfigure}[t]{0.49\textwidth}
\begin{tikzpicture}[node distance = 1cm,auto,scale=.70, transform shape]
\pgfplotsset{every axis y label/.append style={rotate=180,yshift=10.5cm}}
\begin{axis}[
      axis on top=false,
      xmin=15, xmax=145,
      ymin=0, ymax=0.9,
      xtick={22,42,62,80,100,120,138},
      xticklabels={0.01,0.05,0.1,0.5,1.0,1.5,2.0},
        ycomb,
        ylabel near ticks, yticklabel pos=left,
      ylabel={Precision},
      xlabel={Privacy Budget},
      legend style={at={(0.5,-0.18)},
      anchor=north,legend columns=2},
      ymajorgrids=true,
      grid style=dashed,
          ]
\addplot+[mark options={fill=blue},fill=blue!40!,draw=blue,  thick]
coordinates
{(17,.8404) (36,.8404) (55,.8404) (74,.8404) (93,.8404) (112,.8404) (131,.8404)}
\closedcycle;%
\addlegendentry{Clean Data}

\addplot+[mark options={fill=green},fill=pink,draw=green,  thick] 
coordinates
 {(20,.6164) (39,.6258) (58,.7091) (77,.7446) (96,.7578) (115,.8093) (134,.8279)}
\closedcycle;%
\addlegendentry{PMLM \cite{li2018privacy}, MLPAM \cite{gupta2020mlpam}}

\addplot+[mark options={fill=white},fill=red!60!,draw=red!70!,  thick] 
coordinates
 {(23,.6196) (42,.6653) (61,.6817) (80,.7230) (99,.7470) (118,.7971) (137,.8183)}
\closedcycle;%
\addlegendentry{DPEL \cite{LI202134}}

\addplot+[mark options={fill=teal},fill=teal,draw=teal,  thick] 
coordinates
 {(26,.6800) (45,.6800) (64,.6800) (83,.6800) (102,.6800) (121,.6800) (140,.6800)}
\closedcycle;%
\addlegendentry{\hspace{-1.75cm} PDLM \cite{ma2018pdlm}}

\addplot+[mark options={fill=cyan},fill=cyan,draw=cyan,  thick] 
coordinates
 {(29,.5967) (48,.6242) (67,.6683) (86,.7087) (105,.7449) (124,.7777) (143,.8172)}
\closedcycle;%
\addlegendentry{ DA-PMLM }
\end{axis}
\end{tikzpicture}
            \caption{Heart Disease}
        \end{subfigure}
                  \hfill
\begin{subfigure}[t]{0.49\textwidth}
\begin{tikzpicture}[node distance = 1cm,auto,scale=.70, transform shape]
\pgfplotsset{every axis y label/.append style={rotate=180,yshift=10.5cm}}
\begin{axis}[
      axis on top=false,
      xmin=15, xmax=145,
      ymin=0, ymax=1,
      xtick={22,42,62,80,100,120,138},
      xticklabels={0.01,0.05,0.1,0.5,1.0,1.5,2.0},
        ycomb,
        ylabel near ticks, yticklabel pos=left,
      ylabel={Precision},
      xlabel={Privacy Budget},
      legend style={at={(0.5,-0.18)},
      anchor=north,legend columns=2},
      ymajorgrids=true,
      grid style=dashed,
          ]
\addplot+[mark options={fill=blue},fill=blue!40!,draw=blue,  thick]
coordinates
{(17,.9444) (36,.9444) (55,.9444) (74,.9444) (93,.9444) (112,.9444) (131,.9444)}
\closedcycle;%
\addlegendentry{Clean Data}

\addplot+[mark options={fill=green},fill=pink,draw=green,  thick] 
coordinates
 {(20,.6266) (39,.7685) (58,.7833) (77,.8355) (96,.8750) (115,.9466) (134,.9666)}
\closedcycle;%
\addlegendentry{PMLM \cite{li2018privacy}, MLPAM \cite{gupta2020mlpam}}

\addplot+[mark options={fill=white},fill=red!60!,draw=red!70!,  thick] 
coordinates
 {(23,.5797) (42,.6900) (61,.7904) (80,.8666) (99,.9000) (118,.9416) (137,.9466)}
\closedcycle;%
\addlegendentry{DPEL \cite{LI202134}}

\addplot+[mark options={fill=teal},fill=teal,draw=teal,  thick] 
coordinates
 {(26,.7428) (45,.7428) (64,.7428) (83,.7428) (102,.7428) (121,.7428) (140,.7428)}
\closedcycle;%
\addlegendentry{\hspace{-1.75cm} PDLM \cite{ma2018pdlm}}

\addplot+[mark options={fill=cyan},fill=cyan,draw=cyan,  thick] 
coordinates
 {(29,.4333) (48,.5777) (67,.6370) (86,.7999) (105,.8191) (124,.8761) (143,.9444)}
\closedcycle;%
\addlegendentry{DA-PMLM }
\end{axis}
\end{tikzpicture}
            \caption{Iris}
        \end{subfigure}
                  \hfill
\begin{subfigure}[t]{0.49\textwidth}
\begin{tikzpicture}[node distance = 1cm,auto,scale=.70, transform shape]
\pgfplotsset{every axis y label/.append style={rotate=180,yshift=10.5cm}}
\begin{axis}[
      axis on top=false,
      xmin=15, xmax=145,
      ymin=0, ymax=0.8,
      xtick={22,42,62,80,100,120,138},
      xticklabels={ 0.01,0.05,0.1,0.5,1.0,1.5,2.0},
        ycomb,
        ylabel near ticks, yticklabel pos=left,
      ylabel={Precision},
      xlabel={Privacy Budget},
      legend style={at={(0.5,-0.18)},
      anchor=north,legend columns=2},
      ymajorgrids=true,
      grid style=dashed,
          ]
\addplot+[mark options={fill=blue},fill=blue!40!,draw=blue,  thick]
coordinates
{(17,.7627) (36,.7627) (55,.7627) (74,.7627) (93,.7627) (112,.7627) (131,.7627)}
\closedcycle;%
\addlegendentry{Clean Data}

\addplot+[mark options={fill=green},fill=pink,draw=green,  thick] 
coordinates
 {(20,.5197) (39,.5362) (58,.5671) (77,.6406) (96,.6578) (115,.6842) (134,.7305)}
\closedcycle;%
\addlegendentry{PMLM \cite{li2018privacy}, MLPAM \cite{gupta2020mlpam}}

\addplot+[mark options={fill=white},fill=red!60!,draw=red!70!,  thick] 
coordinates
 {(23,.5760) (42,.6271) (61,.6545) (80,.6624) (99,.7094) (118,.7210) (137,.7477)}
\closedcycle;%
\addlegendentry{DPEL \cite{LI202134}}

\addplot+[mark options={fill=teal},fill=teal,draw=teal,  thick] 
coordinates
 {(26,.6800) (45,.6800) (64,.6800) (83,.6800) (102,.6800) (121,.6800) (140,.6800)}
\closedcycle;%
\addlegendentry{\hspace{-1.75cm} PDLM \cite{ma2018pdlm}}

\addplot+[mark options={fill=cyan},fill=cyan,draw=cyan,  thick] 
coordinates
 {(29,.4455) (48,.5305) (67,.5532) (86,.6185) (105,.6386) (124,.6753) (143,.7094)}
\closedcycle;%
\addlegendentry{DA-PMLM }
\end{axis}
\end{tikzpicture}
            \caption{Balance}
        \end{subfigure}
                  \hfill
\begin{subfigure}[t]{0.49\textwidth}
\begin{tikzpicture}[node distance = 1cm,auto,scale=.70, transform shape]
\pgfplotsset{every axis y label/.append style={rotate=180,yshift=10.5cm}}
\begin{axis}[
      axis on top=false,
      xmin=15, xmax=145,
      ymin=0, ymax=1,
      xtick={22,42,62,80,100,120,138},
      xticklabels={0.01,0.05,0.1,0.5,1.0,1.5,2.0},
        ycomb,
        ylabel near ticks, yticklabel pos=left,
      ylabel={Precision},
      xlabel={Privacy Budget},
      legend style={at={(0.5,-0.18)},
      anchor=north,legend columns=2},
      ymajorgrids=true,
      grid style=dashed,
          ]
\addplot+[mark options={fill=blue},fill=blue!40!,draw=blue,  thick]
coordinates
{(17,.9032) (36,.9032) (55,.9032) (74,.9032) (93,.9032) (112,.9032) (131,.9032)}
\closedcycle;%
\addlegendentry{Clean Data}

\addplot+[mark options={fill=green},fill=pink,draw=green,  thick] 
coordinates
 {(20,.3425) (39,.3817) (58,.3767) (77,.4302) (96,.5801) (115,.5953) (134,.7479)}
\closedcycle;%
\addlegendentry{PMLM \cite{li2018privacy}, MLPAM \cite{gupta2020mlpam}}

\addplot+[mark options={fill=white},fill=red!60!,draw=red!70!,  thick] 
coordinates
 {(23,.3587) (42,.3780) (61,.4013) (80,.5411) (99,.6188) (118,.6444) (137,.7137)}
\closedcycle;%
\addlegendentry{DPEL \cite{LI202134}}

\addplot+[mark options={fill=teal},fill=teal,draw=teal,  thick] 
coordinates
 {(26,.6666) (45,.6666) (64,.6666) (83,.6666) (102,.6666) (121,.6666) (140,.6666)}
\closedcycle;%
\addlegendentry{\hspace{-1.75cm} PDLM \cite{ma2018pdlm}}

\addplot+[mark options={fill=cyan},fill=cyan,draw=cyan,  thick] 
coordinates
 {(29,.3157) (48,.3321) (67,.3410) (86,.3691) (105,.4910) (124,.5911) (143,.7020)}
\closedcycle;%
\addlegendentry{DA-PMLM }
\end{axis}
\end{tikzpicture}
            \caption{Nursery}
        \end{subfigure}
\caption{Precision of \textit{CM} in DA-PMLM}
\end{figure} 
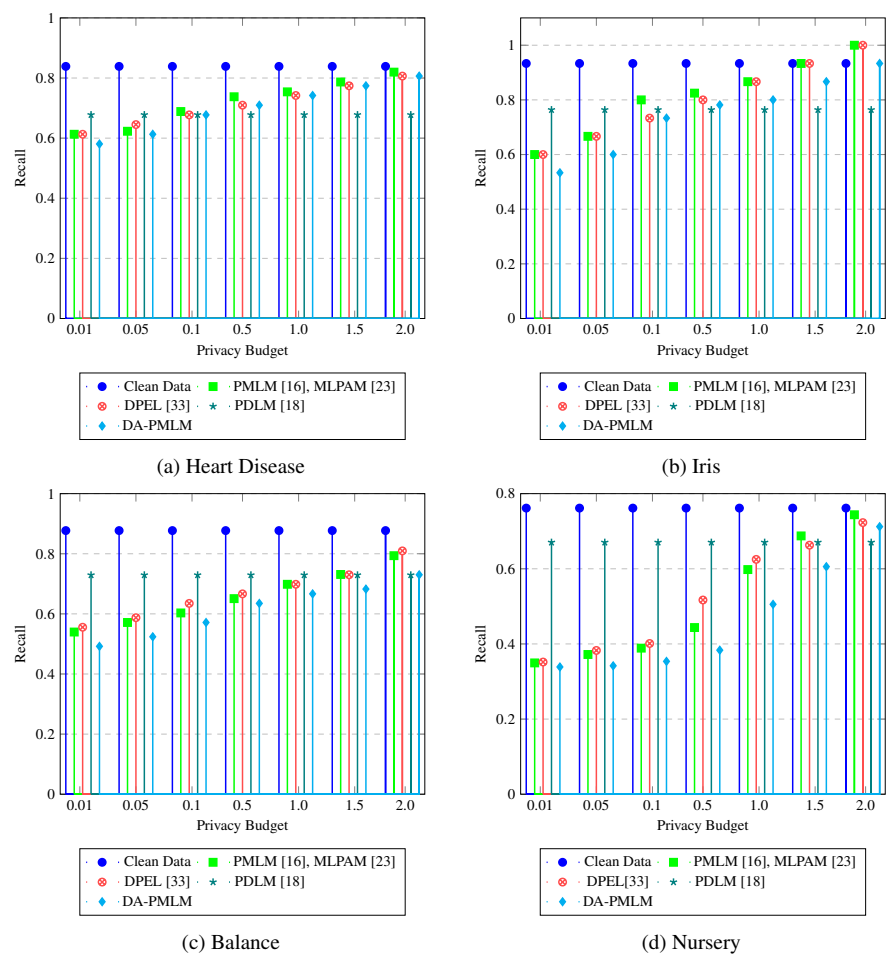
\begin{figure}[!htbp]
\centering
\begin{subfigure}[t]{0.49\textwidth}
\begin{tikzpicture}[node distance = 1cm,auto,scale=.70, transform shape]
\pgfplotsset{every axis y label/.append style={rotate=180,yshift=10.5cm}}
\begin{axis}[
      axis on top=false,
      xmin=15, xmax=145,
      ymin=0, ymax=1,
      xtick={22,42,62,80,100,120,138},
      xticklabels={0.01,0.05,0.1,0.5,1.0,1.5,2.0},
        ycomb,
        ylabel near ticks, yticklabel pos=left,
      ylabel={Recall},
      xlabel={Privacy Budget},
      legend style={at={(0.5,-0.18)},
      anchor=north,legend columns=2},
      ymajorgrids=true,
      grid style=dashed,
          ]
\addplot+[mark options={fill=blue},fill=blue!40!,draw=blue,  thick]
coordinates
{(17,.8387) (36,.8387) (55,.8387) (74,.8387) (93,.8387) (112,.8387) (131,.8387)}
\closedcycle;%
\addlegendentry{Clean Data}

\addplot+[mark options={fill=green},fill=pink,draw=green,  thick] 
coordinates
 {(20,.6129) (39,.6229) (58,.6885) (77,.7377) (96,.7540) (115,.7868) (134,.8196)}
\closedcycle;%
\addlegendentry{PMLM \cite{li2018privacy}, MLPAM \cite{gupta2020mlpam}}

\addplot+[mark options={fill=white},fill=red!60!,draw=red!70!,  thick] 
coordinates
 {(23,.6129) (42,.6451) (61,.6774) (80,.7096) (99,.7419) (118,.7741) (137,.8064)}
\closedcycle;%
\addlegendentry{DPEL \cite{LI202134}}

\addplot+[mark options={fill=teal},fill=teal,draw=teal,  thick] 
coordinates
 {(26,.6777) (45,.6777) (64,.6777) (83,.6777) (102,.6777) (121,.6777) (140,.6777)}
\closedcycle;%
\addlegendentry{\hspace{-1.75cm} PDLM \cite{ma2018pdlm}}

\addplot+[mark options={fill=cyan},fill=cyan,draw=cyan,  thick] 
coordinates
 {(29,.5806) (48,.6129) (67,.6774) (86,.7096) (105,.7419) (124,.7741) (143,.8064)}
\closedcycle;%
\addlegendentry{DA-PMLM }
\end{axis}
\end{tikzpicture}
            \caption{Heart Disease}
        \end{subfigure}
                  \hfill
\begin{subfigure}[t]{0.49\textwidth}
\begin{tikzpicture}[node distance = 1cm,auto,scale=.70, transform shape]
\pgfplotsset{every axis y label/.append style={rotate=180,yshift=10.5cm}}
\begin{axis}[
      axis on top=false,
      xmin=15, xmax=145,
      ymin=0, ymax=1.1,
      xtick={22,42,62,80,100,120,138},
      xticklabels={0.01,0.05,0.1,0.5,1.0,1.5,2.0},
        ycomb,
        ylabel near ticks, yticklabel pos=left,
      ylabel={Recall},
      xlabel={Privacy Budget},
      legend style={at={(0.5,-0.18)},
      anchor=north,legend columns=2},
      ymajorgrids=true,
      grid style=dashed,
          ]
\addplot+[mark options={fill=blue},fill=blue!40!,draw=blue,  thick]
coordinates
{(17,.9333) (36,.9333) (55,.9333) (74,.9333) (93,.9333) (112,.9333) (131,.9333)}
\closedcycle;%
\addlegendentry{Clean Data}

\addplot+[mark options={fill=green},fill=pink,draw=green,  thick] 
coordinates
 {(20,.6000) (39,.6666) (58,.8000) (77,.8247) (96,.8666) (115,.9333) (134,1.0)}
\closedcycle;%
\addlegendentry{PMLM \cite{li2018privacy}, MLPAM \cite{gupta2020mlpam}}

\addplot+[mark options={fill=white},fill=red!60!,draw=red!70!,  thick] 
coordinates
 {(23,.6000) (42,.6666) (61,.7333) (80,.8000) (99,.8666) (118,.93333) (137,1.0)}
\closedcycle;%
\addlegendentry{DPEL \cite{LI202134}}

\addplot+[mark options={fill=teal},fill=teal,draw=teal,  thick] 
coordinates
 {(26,.7638) (45,.7638) (64,.7638) (83,.7638) (102,.7638) (121,.7638) (140,.7638)}
\closedcycle;%
\addlegendentry{\hspace{-1.75cm} PDLM \cite{ma2018pdlm}}

\addplot+[mark options={fill=cyan},fill=cyan,draw=cyan,  thick] 
coordinates
 {(29,.5333) (48,.6000) (67,.7333) (86,.7815) (105,.8000) (124,.8666) (143,.9333)}
\closedcycle;%
\addlegendentry{DA-PMLM }
\end{axis}
\end{tikzpicture}
            \caption{Iris}
        \end{subfigure}
                  \hfill
\begin{subfigure}[t]{0.49\textwidth}
\begin{tikzpicture}[node distance = 1cm,auto,scale=.70, transform shape]
\pgfplotsset{every axis y label/.append style={rotate=180,yshift=10.5cm}}
\begin{axis}[
      axis on top=false,
      xmin=15, xmax=145,
      ymin=0, ymax=1,
      xtick={22,42,62,80,100,120,138},
      xticklabels={0.01,0.05,0.1,0.5,1.0,1.5,2.0},
        ycomb,
        ylabel near ticks, yticklabel pos=left,
      ylabel={Recall},
      xlabel={Privacy Budget},
      legend style={at={(0.5,-0.18)},
      anchor=north,legend columns=2},
      ymajorgrids=true,
      grid style=dashed,
          ]
\addplot+[mark options={fill=blue},fill=blue!40!,draw=blue,  thick]
coordinates
{(17,.8771) (36,.8771) (55,.8771) (74,.8771) (93,.8771) (112,.8771) (131,.8771)}
\closedcycle;%
\addlegendentry{Clean Data}

\addplot+[mark options={fill=green},fill=pink,draw=green,  thick] 
coordinates
 {(20,.5396) (39,.5714) (58,.6031) (77,.6507) (96,.6984) (115,.7311) (134,.7936)}
\closedcycle;%
\addlegendentry{PMLM \cite{li2018privacy}, MLPAM \cite{gupta2020mlpam}}

\addplot+[mark options={fill=white},fill=red!60!,draw=red!70!,  thick] 
coordinates
 {(23,.5555) (42,.5873) (61,.6349) (80,.6666) (99,.6984) (118,.7301) (137,.8095)}
\closedcycle;%
\addlegendentry{DPEL \cite{LI202134}}

\addplot+[mark options={fill=teal},fill=teal,draw=teal,  thick] 
coordinates
 {(26,.7293) (45,.7293) (64,.7293) (83,.7293) (102,.7293) (121,.7293) (140,.7293)}
\closedcycle;%
\addlegendentry{\hspace{-1.75cm} PDLM \cite{ma2018pdlm}}

\addplot+[mark options={fill=cyan},fill=cyan,draw=cyan,  thick] 
coordinates
 {(29,.4920) (48,.5238) (67,.5714) (86,.6349) (105,.6666) (124,.6825) (143,.7301)}
\closedcycle;%
\addlegendentry{DA-PMLM}
\end{axis}
\end{tikzpicture}
            \caption{Balance}
        \end{subfigure}
                  \hfill
\begin{subfigure}[t]{0.49\textwidth}
\begin{tikzpicture}[node distance = 1cm,auto,scale=.70, transform shape]
\pgfplotsset{every axis y label/.append style={rotate=180,yshift=10.5cm}}
\begin{axis}[
      axis on top=false,
      xmin=15, xmax=145,
      ymin=0, ymax=0.8,
      xtick={22,42,62,80,100,120,138},
      xticklabels={0.01,0.05,0.1,0.5,1.0,1.5,2.0},
        ycomb,
        ylabel near ticks, yticklabel pos=left,
      ylabel={Recall},
      xlabel={Privacy Budget},
      legend style={at={(0.5,-0.18)},
      anchor=north,legend columns=2},
      ymajorgrids=true,
      grid style=dashed,
          ]
\addplot+[mark options={fill=blue},fill=blue!40!,draw=blue,  thick]
coordinates
{(17,.7615) (36,.7615) (55,.7615) (74,.7615) (93,.7615) (112,.7615) (131,.7615)}
\closedcycle;%
\addlegendentry{Clean Data}

\addplot+[mark options={fill=green},fill=pink,draw=green,  thick] 
coordinates
 {(20,.3495) (39,.3719) (58,.3888) (77,.4436) (96,.5979) (115,.6875) (134,.7438)}
\closedcycle;%
\addlegendentry{PMLM \cite{li2018privacy}, MLPAM \cite{gupta2020mlpam}}

\addplot+[mark options={fill=white},fill=red!60!,draw=red!70!,  thick] 
coordinates
 {(23,.3518) (42,.3827) (61,.4012) (80,.5169) (99,.6250) (118,.6628) (137,.7229)}
\closedcycle;%
\addlegendentry{DPEL\cite{LI202134}}

\addplot+[mark options={fill=teal},fill=teal,draw=teal,  thick] 
coordinates
 {(26,.6703) (45,.6703) (64,.6703) (83,.6703) (102,.6703) (121,.6703) (140,.6703)}
\closedcycle;%
\addlegendentry{\hspace{-1.75cm} PDLM \cite{ma2018pdlm}}

\addplot+[mark options={fill=cyan},fill=cyan,draw=cyan,  thick] 
coordinates
 {(29,.3387) (48,.3418) (67,.3536) (86,.3834) (105,.5054) (124,.6057) (143,.7121)}
\closedcycle;%
\addlegendentry{DA-PMLM}
\end{axis}
\end{tikzpicture}
            \caption{Nursery}
        \end{subfigure}
\caption{Recall of \textit{CM} in DA-PMLM}
\end{figure} 
\begin{figure}[!htbp]
\centering
\begin{subfigure}[t]{0.49\textwidth}
\begin{tikzpicture}[node distance = 1cm,auto,scale=.70, transform shape]
\pgfplotsset{every axis y label/.append style={rotate=180,yshift=10.5cm}}
\begin{axis}[
      axis on top=false,
      xmin=15, xmax=145,
      ymin=0, ymax=1,
      xtick={22,42,62,80,100,120,138},
      xticklabels={0.01,0.05,0.1,0.5,1.0,1.5,2.0},
        ycomb,
        ylabel near ticks, yticklabel pos=left,
      ylabel={FS},
      xlabel={Privacy Budget},
      legend style={at={(0.5,-0.18)},
      anchor=north,legend columns=2},
      ymajorgrids=true,
      grid style=dashed,
          ]
\addplot+[mark options={fill=blue},fill=blue!40!,draw=blue,  thick]
coordinates
{(17,.8376) (36,.8376) (55,.8376) (74,.8376) (93,.8376) (112,.8376) (131,.8376)}
\closedcycle;%
\addlegendentry{Clean Data}

\addplot+[mark options={fill=green},fill=pink,draw=green,  thick] 
coordinates
 {(20,.6120) (39,.6242) (58,.6896) (77,.7379) (96,.7507) (115,.7876) (134,.8118)}
\closedcycle;%
\addlegendentry{PMLM \cite{li2018privacy}, MLPAM \cite{gupta2020mlpam}}

\addplot+[mark options={fill=white},fill=red!60!,draw=red!70!,  thick] 
coordinates
 {(23,.6137) (42,.6526) (61,.6767) (80,.7784) (99,.7413) (118,.7751) (137,.8085)}
\closedcycle;%
\addlegendentry{DPEL \cite{LI202134}}

\addplot+[mark options={fill=teal},fill=teal,draw=teal,  thick] 
coordinates
 {(26,.6666) (45,.6666) (64,.6666) (83,.6666) (102,.6666) (121,.6666) (140,.6666)}
\closedcycle;%
\addlegendentry{\hspace{-1.75cm} PDLM \cite{ma2018pdlm}}

\addplot+[mark options={fill=cyan},fill=cyan,draw=cyan,  thick] 
coordinates
 {(29,.5832) (48,.6080) (67,.6605) (86,.7019) (105,.7403) (124,.7610) (143,.7963)}
\closedcycle;%
\addlegendentry{DA-PMLM}
\end{axis}
\end{tikzpicture}
            \caption{Heart Disease}
        \end{subfigure}
                  \hfill
\begin{subfigure}[t]{0.49\textwidth}
\begin{tikzpicture}[node distance = 1cm,auto,scale=.70, transform shape]
\pgfplotsset{every axis y label/.append style={rotate=180,yshift=10.5cm}}
\begin{axis}[
      axis on top=false,
      xmin=15, xmax=145,
      ymin=0, ymax=1,
      xtick={22,42,62,80,100,120,138},
      xticklabels={0.01,0.05,0.1,0.5,1.0,1.5,2.0},
        ycomb,
        ylabel near ticks, yticklabel pos=left,
      ylabel={FS},
      xlabel={Privacy Budget},
      legend style={at={(0.5,-0.18)},
      anchor=north,legend columns=2},
      ymajorgrids=true,
      grid style=dashed,
          ]
\addplot+[mark options={fill=blue},fill=blue!40!,draw=blue,  thick]
coordinates
{(17,.9337) (36,.9337) (55,.9337) (74,.9337) (93,.9337) (112,.9337) (131,.9337)}
\closedcycle;%
\addlegendentry{Clean Data}

\addplot+[mark options={fill=green},fill=pink,draw=green,  thick] 
coordinates
 {(20,.5995) (39,.6507) (58,.7458) (77,.7994) (96,.8651) (115,.9344) (134,.9414)}
\closedcycle;%
\addlegendentry{PMLM \cite{li2018privacy}, MLPAM \cite{gupta2020mlpam}}

\addplot+[mark options={fill=white},fill=red!60!,draw=red!70!,  thick] 
coordinates
 {(23,.6071) (42,.6753) (61,.7388) (80,.7948) (99,.8539) (118,.9288) (137,.9344)}
\closedcycle;%
\addlegendentry{DPEL \cite{LI202134}}

\addplot+[mark options={fill=teal},fill=teal,draw=teal,  thick] 
coordinates
 {(26,.7000) (45,.7000) (64,.7000) (83,.7000) (102,.7000) (121,.7000) (140,.7000)}
\closedcycle;%
\addlegendentry{\hspace{-1.75cm} PDLM \cite{ma2018pdlm}}

\addplot+[mark options={fill=cyan},fill=cyan,draw=cyan,  thick] 
coordinates
 {(29,.4666) (48,.5933) (67,.6816) (86,.7836) (105,.8029) (124,.8581) (143,.9333)}
\closedcycle;%
\addlegendentry{DA-PMLM}
\end{axis}
\end{tikzpicture}
            \caption{Iris}
        \end{subfigure}
                  \hfill
\end{figure}
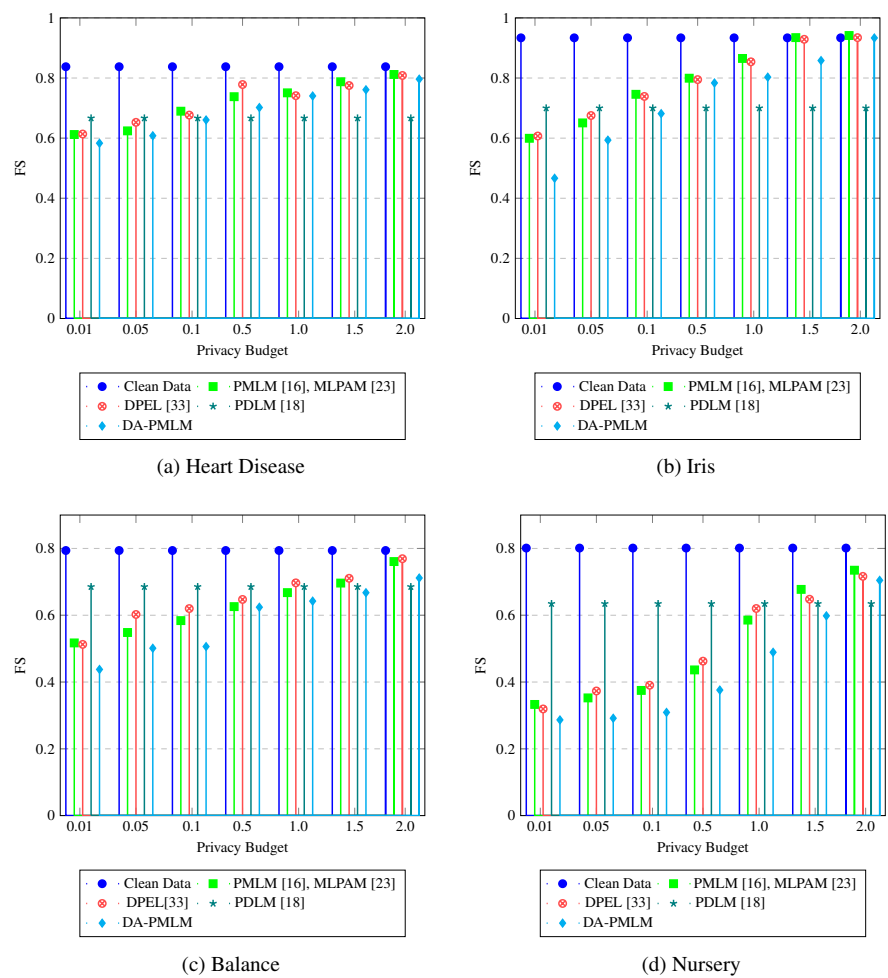
\begin{figure}\setcounter{figure}{6}
\begin{subfigure}[t]{0.49\textwidth}\setcounter{subfigure}{2}
\begin{tikzpicture}[node distance = 1cm,auto,scale=.70, transform shape]
\pgfplotsset{every axis y label/.append style={rotate=180,yshift=10.5cm}}
\begin{axis}[
      axis on top=false,
      xmin=15, xmax=145,
      ymin=0, ymax=0.9,
      xtick={22,42,62,80,100,120,138},
      xticklabels={0.01,0.05,0.1,0.5,1.0,1.5,2.0},
        ycomb,
        ylabel near ticks, yticklabel pos=left,
      ylabel={FS},
      xlabel={Privacy Budget},
      legend style={at={(0.5,-0.18)},
      anchor=north,legend columns=2},
      ymajorgrids=true,
      grid style=dashed,
          ]
\addplot+[mark options={fill=blue},fill=blue!40!,draw=blue,  thick]
coordinates
{(17,.7938) (36,.7938) (55,.7938) (74,.7938) (93,.7938) (112,.7938) (131,.7938)}
\closedcycle;%
\addlegendentry{Clean Data}

\addplot+[mark options={fill=green},fill=pink,draw=green,  thick] 
coordinates
 {(20,.5171) (39,.5483) (58,.5839) (77,.6258) (96,.6677) (115,.6963) (134,.7607)}
\closedcycle;%
\addlegendentry{PMLM \cite{li2018privacy}, MLPAM \cite{gupta2020mlpam}}

\addplot+[mark options={fill=white},fill=red!60!,draw=red!70!,  thick] 
coordinates
 {(23,.5127) (42,.6023) (61,.6200) (80,.6473) (99,.6969) (118,.7106) (137,.7693)}
\closedcycle;%
\addlegendentry{DPEL\cite{LI202134}}

\addplot+[mark options={fill=teal},fill=teal,draw=teal,  thick] 
coordinates
 {(26,.6853) (45,.6853) (64,.6853) (83,.6853) (102,.6853) (121,.6853) (140,.6853)}
\closedcycle;%
\addlegendentry{\hspace{-1.75cm} PDLM \cite{ma2018pdlm}}

\addplot+[mark options={fill=cyan},fill=cyan,draw=cyan,  thick] 
coordinates
 {(29,.4377) (48,.5012) (67,.5062) (86,.6240) (105,.6423) (124,.6675) (143,.7115)}
\closedcycle;%
\addlegendentry{DA-PMLM}
\end{axis}
\end{tikzpicture}
            \caption{Balance}
        \end{subfigure}
                  \hfill
\begin{subfigure}[t]{0.49\textwidth}
\begin{tikzpicture}[node distance = 1cm,auto,scale=.70, transform shape]
\pgfplotsset{every axis y label/.append style={rotate=180,yshift=10.5cm}}
\begin{axis}[
      axis on top=false,
      xmin=15, xmax=145,
      ymin=0, ymax=0.9,
      xtick={22,42,62,80,100,120,138},
      xticklabels={0.01,0.05,0.1,0.5,1.0,1.5,2.0},
        ycomb,
        ylabel near ticks, yticklabel pos=left,
      ylabel={FS},
      xlabel={Privacy Budget},
      legend style={at={(0.5,-0.18)},
      anchor=north,legend columns=2},
      ymajorgrids=true,
      grid style=dashed,
          ]
\addplot+[mark options={fill=blue},fill=blue!40!,draw=blue,  thick]
coordinates
{(17,.8011) (36,.8011) (55,.8011) (74,.8011) (93,.8011) (112,.8011) (131,.8011)}
\closedcycle;%
\addlegendentry{Clean Data}

\addplot+[mark options={fill=green},fill=pink,draw=green,  thick] 
coordinates
 {(20,.3329) (39,.3525) (58,.3749) (77,.4363) (96,.5855) (115,.6774) (134,.7348)}
\closedcycle;%
\addlegendentry{PMLM \cite{li2018privacy}, MLPAM \cite{gupta2020mlpam}}

\addplot+[mark options={fill=white},fill=red!60!,draw=red!70!,  thick] 
coordinates
 {(23,.3194) (42,.3732) (61,.3904) (80,.4625) (99,.6202) (118,.6479) (137,.7164)}
\closedcycle;%
\addlegendentry{DPEL \cite{LI202134}}

\addplot+[mark options={fill=teal},fill=teal,draw=teal,  thick] 
coordinates
 {(26,.6345) (45,.6345) (64,.6345) (83,.6345) (102,.6345) (121,.6345) (140,.6345)}
\closedcycle;%
\addlegendentry{\hspace{-1.75cm} PDLM \cite{ma2018pdlm}}

\addplot+[mark options={fill=cyan},fill=cyan,draw=cyan,  thick] 
coordinates
 {(29,.2866) (48,.2916) (67,.3091) (86,.3760) (105,.4891) (124,.5982) (143,.7045)}
\closedcycle;%
\addlegendentry{DA-PMLM}
\end{axis}
\end{tikzpicture}
            \caption{Nursery}
        \end{subfigure}
\caption{F1-Score of \textit{CM} in DA-PMLM}
\end{figure}
\subsection{Comparison}
The results of the experiments are compared to clean data, PMLM \cite{li2018privacy}, MLPAM \cite{gupta2020mlpam}, DPEL \cite{LI202134}, and PDLM \cite{ma2018pdlm}, which are also deployed on the same platform (Figs. 4(a)-(d) to 7(a)-(d)). The parameters (\textit{CA}, \textit{P}, \textit{R}, and \textit{FS}) results for DA-PMLM are less than the results of Clean, PMLM \cite{li2018privacy}, MLPAM \cite{gupta2020mlpam}, and DPEL \cite{LI202134} in all the cases because of the noise addition in the data and classification model. 
From Table 5, it is observed that the highest difference for \textit{CA} between DA-PMLM and DPEL is 13.35\% on the Nursey dataset with 0.5 privacy budget. The lowest difference is found 0.0\% on the Heart Disease dataset with 0.1, 0.5, 1.0, 1.5, and 2.0 privacy budget, and on the Iris dataset with 0.1 privacy budget. Besides, the maximum gap for \textit{P} is 17.16\% on the Nursery dataset with 0.5 privacy budget, but the lowest differences are found 0.11\% on the Heart Disease dataset with 2.0  privacy budget. The maximum decrement for \textit{R} of DA-PMLM is 13.35\% from DPEL on the Nursey dataset with 0.5 privacy budget. In contrast, the smallest decrement is 0.0\% on the Heart Disease dataset with 0.1, 0.5, 1.0, 1.5, and 2.0 privacy budget, and on the Iris dataset with 0.1 privacy budget. The highest difference for \textit{FS} is 14.05\% on the Iris dataset with 0.01 privacy budget, while the lowest difference for \textit{FS} is 0.1\% on the Heart Disease dataset with 1.0 privacy budget.
\par
Additionally, the parameters (\textit{CA}, \textit{P}, \textit{R}, and \textit{FS}) results for DA-PMLM are less than the results of PMLM and MLPAM in all the cases due to noise addition in the model. The maximum gap for \textit{CA}  is 9.25\% on the Nursey dataset with 1.0 privacy budget, but the smallest gap is found 1.0\% on the Heart Disease with 0.05 privacy budget. Similarly, the highest difference for \textit{P} is 19.33\% on the Iris with 0.01 privacy budget, while the lowest difference is found 0.16\% on the Heart Disease dataset with 0.05 privacy budget. The \textit{R} of DA-PMLM the maximum decrement by 8.72\% from PMLM, MLPAM on the Nursery dataset with 1.0 privacy budget, whereas the smallest decrement is 1.0\% on the Heart Disease dataset with 0.05 privacy budget. The highest difference for \textit{FS} is 13.29\% on the Iris dataset with 0.05 privacy budget, but the lowest difference for \textit{FS} is 0.18\% on the Balance dataset with 0.5 privacy budget.
\begin{table}[!htbp]
\caption{Accuracy, precision, recall, and f1-score; DA-PMLM v/s state-of-the-art works \cite{li2018privacy}, \cite{gupta2020mlpam}, \cite{LI202134} and the baseline schemes}
\label{tab:11}      
\begin{adjustbox}{width=1.0\textwidth} 
\begin{tabular}{ccccc|ccc|ccc|ccc}
\hline
Dataset & Epsilon & \multicolumn{12}{ c }{ Decrements in the value of parameters}  \\ \cline{3-14}
& & \multicolumn{3}{ c }{Accuracy} & \multicolumn{3}{ c }{Precision} & \multicolumn{3}{ c }{Recall} & \multicolumn{3}{ c }{F1-Score} \\
\cline{3-14}
& & Clean data & \cite{li2018privacy}, \cite{gupta2020mlpam} & \cite{LI202134} & Clean data & \cite{li2018privacy}, \cite{gupta2020mlpam} & \cite{LI202134} & Clean data & \cite{li2018privacy}, \cite{gupta2020mlpam} & \cite{LI202134} & Clean data & \cite{li2018privacy}, \cite{gupta2020mlpam} & \cite{LI202134} \\ \hline 
 
& 0.01 & 25.81	& 3.23 & 3.23 & 24.37	& 1.97	& 2.29	& 25.81	& 3.23	& 3.23	& 25.44	& 2.88	& 3.05 \\
& 0.05 & 22.58	& 1	& 3.22	& 21.62	& 0.16	& 4.11	& 22.58	& 1	& 3.22	& 22.96	& 1.62	& 4.46 \\
Heart  & 0.1 & 16.13 & 1.11	& 0	& 17.21	& 4.08	& 1.34	& 16.13	& 1.11	& 0	& 17.71	& 2.91	& 1.62 \\
Disease & 0.5 & 12.91 & 2.81	& 0	& 13.17	& 3.59	& 1.43	& 12.91	& 2.81	& 0	& 13.57	& 3.6	& 0.65 \\
& 1.0 & 9.68 & 1.21	& 0	& 9.55	& 1.29	& 0.21	& 9.68	& 1.21	& 0	& 9.73	& 1.04	& 0.1 \\
& 1.5 & 6.46 & 1.27	& 0	& 6.27	& 3.16	& 1.94	& 6.46	& 1.27	& 0	& 7.66	& 2.66	& 1.41 \\
& 2.0 & 3.23 & 1.32	& 0	& 2.32	& 1.07	& 0.11	& 3.23	& 1.32	& 0	& 4.13	& 2.55	& 1.22 \\ \hline
& 0.01 &  40 & 6.67	& 6.67	& 51.11	& 19.33	& 14.64	& 40 & 6.67	& 6.67	& 46.71	& 13.29	& 14.05 \\
& 0.05 & 33.33 & 6.66 & 6.66 & 36.67 & 19.08 & 11.23 & 33.33 & 6.66	& 6.66 & 34.04 & 5.74 & 8.2 \\
& 0.1 & 20 & 6.67 & 0 & 30.74 & 14.63 & 15.34 & 20	& 6.67	& 0	& 25.21	& 6.42	& 5.72 \\
Iris & 0.5 & 15.18	& 4.32	& 1.85	& 14.45	& 3.56	& 6.67	& 15.18	& 4.32	& 1.85	& 15.01	& 1.58	& 1.12 \\
& 1.0 & 13.33 & 6.66 & 6.66	& 12.53	& 5.59	& 8.09	& 13.33	& 6.66	& 6.66	& 13.08	& 6.22	& 5.1 \\
& 1.5 & 6.67 & 6.67	& 6.67	& 6.83	& 7.05	& 6.55	& 6.67	& 6.67	& 6.67	& 7.56	& 7.63	& 7.07 \\
& 2.0 & 0 & 6.67 & 6.67	& 0	& 2.22	& 0.22	& 0	& 6.67	& 6.67	& 0.04	& 0.81	& 0.11 \\ \hline
& 0.01 & 36.51	& 4.76	& 6.35	& 31.72	& 7.42	& 13.05	& 38.51	& 4.76	& 6.35	& 35.61	& 7.94	& 7.5 \\
& 0.05 & 33.33	& 4.76	& 6.35	& 22.65	& 0.57	& 9.09	& 35.33	& 4.76	& 6.35	& 29.26	& 4.71	& 10.11 \\
& 0.1 & 28.57 & 3.17 & 6.35	& 20.95	& 1.39	& 10.13	& 30.57	& 2.99	& 6.35	& 28.76	& 7.77	& 11.38 \\
Balance & 0.5 & 22.22 & 1.58 & 3.17	& 14.42	& 2.21	& 4.39	& 24.22	& 1.58	& 3.17	& 16.98	& 0.18	& 2.33 \\
& 1.0 & 19.05 & 3.18 & 3.18	& 12.41	& 1.92	& 7.08	& 21.05	& 3.18	& 3.18	& 15.15	& 2.54	& 5.46 \\
& 1.5 & 17.46 & 4.76 & 4.76	& 8.74	& 0.89	& 4.57	& 19.46	& 4.86	& 4.76	& 12.63	& 2.88	& 4.31 \\
& 2.0 & 12.7 & 6.35	& 7.94 &5.33 & 2.11	& 3.83	& 14.7	& 6.35	& 7.94	& 8.23	& 4.92	& 5.78 \\ \hline
& 0.01 & 42.28 & 1.08 & 1.31 & 58.75 & 2.68	& 4.3 & 42.28	& 1.08	& 1.31	& 51.45	& 4.63	& 3.28 \\
& 0.05 & 41.97 & 3.01	& 4.09	& 57.11	& 4.96	& 4.59	& 41.97	& 3.01	& 4.09	& 50.95	& 6.09	& 8.16 \\
& 0.1 & 40.89 & 3.62 & 4.86	& 56.22	& 3.57	& 6.03	& 40.89	& 3.62	& 4.86	& 49.2	& 6.58	& 8.13 \\
Nursery & 0.5 & 37.81 & 6.02 & 13.35 & 53.37 & 6.07	& 17.16	& 37.81	& 6.02	& 13.35	& 42.51	& 6.03 & 8.65 \\
& 1.0 & 25.61 & 9.25 & 11.96 & 41.22 & 8.91	& 12.78	& 25.61	& 9.25 & 11.96	& 31.2	& 9.64	& 13.11 \\
& 1.5 & 15.58 & 8.18 & 5.71	& 31.21	& 10.42	& 5.33	& 15.58	& 8.18	& 5.71	& 20.29	& 7.92	& 4.97 \\
& 2.0 & 4.94 & 3.17	& 1.08	& 20.12	& 4.59	& 1.17	& 4.94	& 3.17	& 1.08	& 9.66	& 3.03	& 1.19 \\ \hline
\end{tabular}
\end{adjustbox}
\end{table}
\begin{table}[!htbp]
\centering
\caption{Improvement in the values of \textit{CA}, \textit{P}, \textit{R}, and \textit{FS} of DA-PMLM vs PDLM \cite{ma2018pdlm}}
\label{tab:12}      
\begin{tabular}{ccc|c|c|c}
\hline
Dataset & Epsilon & \multicolumn{4}{ c }{Increments in the value of parameters}  \\ \cline{3-6}
& & \textit{CA} & \textit{P} & \textit{R} & \textit{FS} \\
\hline 
 
& 0.01 & -9.71	& -8.33 & -9.71 & -8.34	 \\
& 0.05 & -6.48	& -5.58	& -6.48	& -5.86	 \\
Heart  & 0.1 & -0.03 & -1.17 & -0.03 & -0.61 \\
Disease & 0.5 & 3.19 & 2.87 & 3.19 &3.53  \\
& 1.0 & 6.42 & 6.49	& 6.42 & 7.37	 \\
& 1.5 & 9.64 & 9.77 & 9.64 & 9.44	 \\
& 2.0 & 12.87 & 13.72 & 12.87 & 12.97	 \\ \hline
& 0.01 & -23.05  & -30.95 & -23.05	& -23.34 \\
& 0.05 & -16.38 & -16.51 & -16.38 & -10.67  \\
& 0.1 & -3.05 & -10.58 & -3.05 & -1.84  \\
Iris & 0.5 & 1.77 & 5.71 & 1.77	& 8.36	 \\
& 1.0 & 3.62 & 7.63 & 3.62	& 	10.29 \\
& 1.5 & 10.28 & 13.33 & 10.28 & 15.81	\\
& 2.0 & 16.95 & 20.16 & 16.95 & 23.33	 \\ \hline
& 0.01 & -23.73	& -23.45 & -23.73	& 	-24.76	 \\
& 0.05 & -20.55	& -14.95 & -20.55	& -18.41	 \\
& 0.1 & -15.79 & -12.68 & -15.79	& -17.91	 \\
Balance & 0.5 & -9.44 & -6.15 &	-9.44 & -6.13 \\
& 1.0 & -6.27 & -4.14 & -6.27	& -4.3	\\
& 1.5 & -4.68 & -0.47 & -4.68	& 	-1.78	 \\
& 2.0 & 0.08 & 2.94	& 0.08 & 2.62 \\ \hline
& 0.01 & -33.16 & -35.09 & -33.16 & -34.79 \\
& 0.05 & -32.85 & -33.45	& -32.85	& 	-34.29	 \\
& 0.1 & -31.77 & -32.56 & -31.77	& -32.54 	 \\
Nursery & 0.5 & -28.69  & -29.71 & 	-28.69 & -25.85  \\
& 1.0 & -16.49 & -17.56 & -16.49 &  -14.54 \\
& 1.5 & -6.46 & -7.55  & -6.46	& 	-3.63 \\
& 2.0 & 4.18 & 	3.54	&  4.18	& 7.00	\\ \hline
\end{tabular}
\end{table}
\par 
Likewise,  DA-PMLM outperforms PDLM \cite{ma2018pdlm} because the proposed model's efficiency is increased by performing fewer calculations on synthetic data instead of ciphertext. Table 6 shows that the improvement ranges for \textit{CA}, \textit{P}, \textit{R}, and \textit{FS} are 0.08\% to 16.95\%, 2.87\% to 20.16\%, 0.08\% to 16.95\%, and 2.62\% to 23.33\% over all datasets.
\par
Moreover, the parameters (\textit{CA}, \textit{P}, \textit{R}, and \textit{FS}) results for DA-PMLM are less than the results of clean data in all the cases due to noise addition. The maximum gap for \textit{CA}  is 42.28\% on the Nursey dataset with 0.01 privacy budget, but the lowest gap is found 0.0\% on the Iris dataset with 2.0 privacy budget. Similarly, the highest difference for \textit{P} is 58.75\% on the Nursey dataset with 0.01 privacy budget, while the lowest difference is found 0.0\% on the Iris Disease with 2.0 privacy budget. The maximum decrement for \textit{R} of DA-PMLM is 42.28\% from clean data on the Nursey dataset with 0.01 privacy budget, whereas the smallest decrement is 0.0\% on the Iris dataset with 2.0 privacy budget. The highest difference for \textit{FS} is 51.45\% on the Nursey dataset with 0.01 privacy budget, but the lowest difference for \textit{FS} is 0.04\% on the Iris dataset with 2.0 privacy budget. 
However, The parameters (\textit{CA}, \textit{P}, \textit{R}, and \textit{FS}) results for DA-PMLM are less or equal and provide more protection than the clean data, PMLM, MLPAM, DPEL, and PDLM. The proposed model acquires a trade-off between accuracy loss and privacy improvement when the noise is added using individual privacy constraints.
\par
Furthermore, a comprehensive feature analysis along with a comparison of DA-PMLM against the state-of-the-art works \cite{li2018privacy}, \cite{gupta2020mlpam}, \cite{LI202134}, \cite{ma2018pdlm} is performed. From Table 7, it is observed that DA-PMLM is the only model that preserves the privacy of data and classification model. It also permits several untrusted entities, data owners, request users, and classifier owners to participate. DA-PMLM works on the dynamic privacy budget. In DA-PMLM, the use of the privacy budget is according to the level of protection required. Therefore, the performance of DA-PMLM outperforms than the existing schemes \cite{li2018privacy}, \cite{gupta2020mlpam}, \cite{LI202134}, \cite{ma2018pdlm}.
\begin{table}[!htbp] 
\centering
\caption{Comparative analysis of relevant schemes}
\resizebox{\textwidth}{!}{
\begin{tabular}{llllcccc}
\hline
Schemes & UE & DO  & RU & CO & Privacy & Data  & Model    \\     
&   &  &  &  & Budget & Privacy & Privacy   \\ \hline     
PMLM \cite{li2018privacy} & multiple & multiple & no & $\times$ & static & $\checkmark$ &  $\times$    \\ \hline
MLPAM \cite{gupta2020mlpam}& multiple & multiple & multiple & $\times$ & static & $\checkmark$ &  $\times$   \\ \hline
DPEL \cite{LI202134} & single & single & no  & $\times$ & dynamic & $\times$ & $\checkmark$  \\ \hline
PDLM \cite{ma2018pdlm} & single & multiple & no & $\times$ & no & $\checkmark$ & $\times$  \\ \hline
DA-PMLM & multiple & multiple & multiple &  $\checkmark$ & dynamic & $\checkmark$ & $\checkmark$  \\ \hline
\noalign{\smallskip}
\end{tabular}}
\end{table}
\subsection{Statistical Analysis}
Statistical analysis is used to validate the \textit{CA}, \textit{P}, \textit{R}, and \textit{FS} of the model. In this context, the non-parametric test is applied to the dataset that is not normally distributed. The null hypothesis states that the acquired results from different methods are statistically identical in the Wilcoxon signed-rank test. This test compares the performance of DA-PMLM to that of the existing DPEL \cite{LI202134}, PMLM \cite{li2018privacy}, MLPAM \cite{gupta2020mlpam}, and PDLM \cite{ma2018pdlm} models. The test is run on the dataset with a significance level \cite{kumar2020biphase} (p-value) of 0.05 to determine the importance of classifying parameters. Table 8 demonstrates the results of the test statistics.
\begin{table}[!htbp]
\centering
\caption{Wilcoxon test statistics (p-value is 0.05)}
\label{table:21}
\resizebox{\textwidth}{!}{
\begin{tabular}{llllllll}
\hline
Dataset & Classification & \multicolumn{2}{ c }{Comparison of} & \multicolumn{2}{ c }{Comparison of DA-PMLM,} & \multicolumn{2}{ c }{Comparison of} \\  
&  Parameters &\multicolumn{2}{ c }{DA-PMLM \& DPEL \cite{LI202134}} & \multicolumn{2}{ c }{PMLM \cite{li2018privacy} \& MLPAM \cite{gupta2020mlpam}} & \multicolumn{2}{ c }{DA-PMLM \& PDLM \cite{ma2018pdlm}}   \\    \cline{3-8}
& & p-value & Result & p-value & Result & p-value & Result \\ \hline 
& Accuracy  & 0.180 & Accepted & 0.018 & Rejected & 0.612 & Accepted  \\ 
Heart & Precision & 0.018 & Rejected & 0.018 & Rejected & 0.398 & Accepted \\ 
Disease & Recall & 0.180 & Accepted & 0.018 & Rejected & 0.612 & Accepted \\ 
& F1-score & 0.018 & Rejected & 0.018 & Rejected & 0.398 & Accepted  \\ \hline
& Accuracy  & 0.026 & Rejected & 0.016 & Rejected & 1.000 & Accepted \\ 
Iris & Precision  & 0.018 & Rejected & 0.018 & Rejected & 0.866 & Accepted  \\ 
 & Recall  & 0.026 & Rejected & 0.016 & Rejected & 1.000 & Accepted  \\ 
& F1-score  & 0.018 & Rejected  & 0.018 & Rejected & 0.735 & Accepted \\ \hline
& Accuracy  & 0.017 & Rejected & 0.018 & Rejected & 0.028 & Rejected\\ 
Balance & Precision  & 0.018 & Rejected & 0.018 & Rejected & 0.043 & Rejected \\ 
Scale & Recall  & 0.017 & Rejected & 0.017 & Rejected & 0.028 & Rejected\\ 
& F1-score  & 0.018 & Rejected  & 0.018 & Rejected & 0.043 & Rejected \\ \hline
& Accuracy & 0.018 & Rejected & 0.018 & Rejected  & 0.028 & Rejected\\ 
Nursery & Precision  & 0.018 & Rejected & 0.018 & Rejected & 0.028 & Rejected\\ 
 & Recall  & 0.018 & Rejected & 0.018 & Rejected & 0.028 & Rejected \\ 
& F1-score  & 0.018 & Rejected & 0.018 & Rejected & 0.043 & Rejected \\ \hline
\noalign{\smallskip}
\end{tabular}}
\end{table}
\par
While comparing DA-PMLM and DPEL, it is observed that the null hypothesis for \textit{P}, and \textit{FS} is rejected but accepted for \textit{CA}, and \textit{R} on the Heart Disease dataset. The null hypothesis is rejected for \textit{CA}, \textit{P}, \textit{R}, and \textit{FS} on the Iris, Balance Scale, and Nursery dataset, because their p-values are less than 0.05. Similarly, comparing DA-PMLM, PMLM, and MLPAM, the null hypothesis is rejected for \textit{CA}, \textit{P}, \textit{R}, and \textit{FS} on the Heart Disease, Iris, Balance Scale, and Nursery datasets. Moreover, comparing DA-PMLM and PDLM, the null hypothesis is accepted for \textit{CA}, \textit{P}, \textit{R}, and \textit{FS} on the Heart Disease, Iris datasets, but rejected for Balance Scale, and Nursery datasets. Based on the achieved statistics results, DA-PMLM improves \textit{CA}, \textit{P}, \textit{R}, \textit{FS}, which demonstrates the superiority of the proposed model.
\section{Conclusion}
This paper proposed a novel model named DA-PMLM that protects the sensitive data and classification model outsourced by multiple owners in a real cloud environment. DA-PMLM permits the various data owners to store and analyze their data and the classifier owner to share the model to handle the multiple requests in the cloud. In this work, data owners inject varying statistical noise to sensitive data according to their queries as well as classifier owner adds the statistical noise into the model to preserve privacy. The experiments have been performed, and the results demonstrate that the parameter's values of classification of DA-PMLM (accuracy, precision, recall, and f1-score) are degraded by increased privacy (i.e., lower $\epsilon$). However, the results also illustrate that more elevated $\epsilon$ can enhance the parameter's classification values of DA-PMLM. The performance of DA-PMLM on the prominent data sets is more secure, efficient, and optimal than existing works in this regard. As part of our future work, we will devise a more efficient privacy-preserving mechanism that reduces performance degradation.
\bibliographystyle{spbasic}      
\bibliography{reference.bib}
\end{document}